
\documentclass[preprint,11pt]{elsarticle}
\pdfoutput=1


\makeatletter
\def\ps@pprintTitle{%
 \let\@oddhead\@empty
 \let\@evenhead\@empty
 \def\@oddfoot{\centerline{\thepage}}%
 \let\@evenfoot\@oddfoot}
\makeatother


\usepackage{graphicx}
\usepackage{amsmath}
\usepackage{amssymb}
\usepackage{mathrsfs}
\usepackage{amsthm}
\usepackage{bm}
\usepackage{url}
\usepackage[T1]{fontenc}
\usepackage{csquotes}

\usepackage{setspace}
\MakeOuterQuote{"}
 \biboptions{sort&compress}


%

\newtheorem{theorem}{Theorem}
\newtheorem{lemma}{Lemma}

\theoremstyle{definition}

\theoremstyle{remark}
\newtheorem{remark}{Remark}



\newcommand{\mrm}[1]{\mathrm{#1}}
\newcommand{\tr}{\operatorname{tr}}

\newcommand{\diag}{\operatorname{diag}}

\newcommand{\rep}{\mathrel{\widehat{=}}}

\newcommand{\rmi}{\mathrm{i}}
\newcommand{\rme}{\mathrm{e}}

\newcommand{\rmT}{\mathrm{T}}


\newcommand{\be}{\begin{equation}}
\newcommand{\ee}{\end{equation}}
\newcommand{\ba}{\begin{align}}
\newcommand{\ea}{\end{align}}

\def\<{\langle}  
\def\>{\rangle}  


\newcommand{\bbF}{\mathbb{F}}
\newcommand{\bbQ}{\mathbb{Q}}
\newcommand{\bbZ}{\mathbb{Z}}
\newcommand{\Sp}[2]{\mrm{Sp}(#1,#2)}

\newcommand{\ASp}[2]{\mrm{ASp}(#1,#2)}
\newcommand{\SL}[2]{\mrm{SL}(#1,#2)}
\newcommand{\PSL}[2]{\mrm{PSL}(#1,#2)}

\newcommand{\soc}{\operatorname{soc}}

\newcommand{\Gal}{\mrm{Gal}}

\newcommand{\hw}{D}
\newcommand{\phw}{\overline{D}}

\newcommand{\pc}{\overline{\mathrm{C}}}







\def\eqref#1{\textup{(\ref{#1})}}  
\newcommand{\eref}[1]{Eq.~\textup{(\ref{#1})}}

\newcommand{\sref}[1]{Sec.~\ref{#1}}
\newcommand{\Sref}[1]{Section~\ref{#1}}

\newcommand{\thref}[1]{Theorem~\ref{#1}}
\newcommand{\Thref}[1]{Theorem~\ref{#1}}
\newcommand{\thsref}[1]{Theorems~\ref{#1}}

\newcommand{\lref}[1]{Lemma~\ref{#1}}
\newcommand{\Lref}[1]{Lemma~\ref{#1}}
\newcommand{\lsref}[1]{Lemmas~\ref{#1}}

\newcommand{\cref}[1]{Conjecture~\ref{#1}}
\newcommand{\Cref}[1]{Conjecture~\ref{#1}}

\newcommand{\rcite}[1]{Ref.~\cite{#1}}
\newcommand{\rscite}[1]{Refs.~\cite{#1}}




\begin{document}
\begin{frontmatter}
\title{Super-symmetric informationally complete measurements}
\author{Huangjun Zhu}
\address{Perimeter Institute for Theoretical Physics, Waterloo, On N2L 2Y5, Canada\\
\today}
\ead{hzhu@pitp.ca}


\begin{abstract}
Symmetric informationally complete measurements (SICs in short) are highly symmetric structures in the Hilbert space. They possess many nice properties which render them an ideal candidate for fiducial measurements. The symmetry of SICs is intimately connected with the geometry of the quantum state space and also has profound implications for foundational studies.   Here we explore those SICs that are most symmetric according to a natural criterion and show that all of them are covariant with respect to the Heisenberg-Weyl groups,  which are characterized  by the discrete analogy of the canonical commutation relation. Moreover,  their symmetry groups are subgroups of the Clifford groups. In particular, we prove that the SIC in dimension~2, the Hesse  SIC in dimension~3, and the set of Hoggar lines in dimension~8 are the only three SICs up to unitary equivalence whose symmetry groups act transitively on pairs of SIC projectors. Our work not only provides valuable insight  about SICs,   Heisenberg-Weyl groups,  and Clifford groups, but also offers a new approach and perspective for studying many other discrete symmetric structures behind finite state quantum mechanics, such as mutually unbiased bases and discrete Wigner functions.
\end{abstract}

\begin{keyword}
Symmetric informationally complete measurements (SICs) \sep Heisenberg-Weyl groups \sep Clifford groups \sep Super-symmetric \sep Hesse  SIC \sep Hoggar lines

\end{keyword}

\end{frontmatter}

%




\section{Introduction}
Symmetry plays a fundamental role in all areas of natural science. Of special interest are those objects possessing highest symmetry, which are the targets of constant quest. In this paper, we are concerned with an elusive discrete symmetric structure known as \emph{symmetric informationally complete measurements} (SICs in short) \cite{Zaun11,ReneBSC04, ScotG10,Zhu12the, ApplFZ15G}.
In a $d$-dimensional Hilbert space, a SIC is usually composed of $d^2$ subnormalized projectors onto pure states
$|\psi_j\rangle\langle\psi_j|/d$ with equal pairwise fidelity,
\begin{equation} \label{eq:SICinner}
|\langle\psi_j|\psi_k\rangle|^2=\frac{d\delta_{jk}+1}{d+1}.
\end{equation}
Here by a "SIC" we shall mean the set  of SIC
projectors $\Pi_j=|\psi_j\rangle\langle\psi_j|$, which sum up to $d$ times of the identity, $\sum_j\Pi_j=d$.
SICs have many nice properties rooted in their high symmetry, which make them an ideal candidate for fiducial measurements. They  play a crucial role in studying  quantum Bayesianism \cite{FuchS13} and in understanding the geometry of quantum state space \cite{BengZ06book,ApplEF11}.
They are useful in  linear quantum state tomography
\cite{Scot06,ZhuE11,Zhu12the, Zhu14IOC,Zhu14T,PetzR12E}, quantum cryptography
\cite{FuchS03,Rene04the,Rene05,EnglKNC04,DurtKLL08}, and signal processing~\cite{HowaCM06}.
They also have intriguing connections with many other interesting subjects, such as  equiangular lines, 2-designs,
mutually unbiased bases (MUB),  Lie algebras, and Galois theory; see \rcite{ApplFZ15G} and references therein.

A SIC gives rise to  a regular simplex in the operator space. Although this perspective is  fruitful in understanding its properties \cite{ApplFZ15G}, it has an obvious limitation: most permutations among the SIC projectors cannot be realized by unitary or antiunitary transformations, those transformations that preserve the quantum state space. This conflict between permutation symmetry and unitary symmetry  has profound implications for foundational studies  and  quantum information science \cite{BengZ06book, FuchS13, ApplEF11, ApplFZ14M,Zhu15P}.
It is closely connected to  the fact that the state space is not a ball except for dimension 2.
A better understanding of the symmetry of SICs is crucial to decoding the geometry of the quantum state space as well as foundational and practical issues entangled with the geometry.

 How symmetric is a SIC? This is a basic question we need to answer before we can conceive a clear picture about the quantum state space. Motivated by this question, here we determine those SICs that are most symmetric according to a natural criterion.  By virtue of the classification of finite simple groups (CFSG) \cite{ConwCNP85,Wils09book}, we show that the SIC in dimension~2, the Hesse  SIC in dimension~3 \cite{Zaun11,Appl05,Zhu10,Zhu12the, Hugh07, Beng10, TabiA13, DangBBA13}, and the set of Hoggar lines in dimension~8 \cite{Hogg98, Zaun11, Zhu12the} are the only three SICs up to unitary equivalence whose symmetry groups act transitively on pairs of SIC projectors. All these SICs are covariant with respect to  Heisenberg-Weyl (HW) groups,  and their symmetry groups are subgroups of  Clifford groups \cite{BoltRW61I, BoltRW61II, Appl05, Zhu10}. Moreover, only the Hesse SIC is covariant with respect to the  Clifford group. These results provide valuable insight on the elusive symmetry of SICs and geometry of the quantum state space.

Although our main focus here is the symmetry of SICs, it turns out that the approach introduced here  is also surprisingly useful  for
studying many other discrete symmetric structures behind finite state quantum mechanics, such as HW groups,   Clifford groups, MUB, discrete Wigner functions, and unitary 2-designs \cite{Zhu15P, Zhu15M}. For example, based on the present work, recently we have shown that the operator basis underlying the discrete Wigner function introduced by Wootters \cite{Woot87} is almost uniquely characterized by the double transitivity of its symmetry group \cite{Zhu15P}, which turns out to be equivalent to its symmetry group being a unitary 2-design \cite{Zhu15U}.  These conclusions establish the unique status of the Wootters discrete Wigner function over all other  quasi-probability representations of quantum mechanics. The ramifications of the present work are still under exploration.

The rest of the paper is organized as follows. In \sref{sec:stage} we introduce the background and the concept of super-symmetric informationally complete measurements (super-SICs for short). In \sref{sec:main} we state our main results. In \sref{sec:equi} we establish the equivalence of several symmetry properties of SICs. In \sref{sec:tripleP} we derive a necessary condition for the existence of super-SICs based on triple products among SIC projectors. In \sref{sec:superSICpp} we determine all super-SICs in prime power dimensions. In \sref{sec:superSIC-HW} we reveal a remarkable connection between super-SICs and the HW group, thereby determining all super-SICs. \Sref{sec:summary} summarizes this paper. Some technical details are relegated to the appendix.

\section{\label{sec:stage}Setting the stage}

\subsection{Symmetry of SICs}
The \emph{symmetry group}
$\overline{G}$  of a SIC $\{\Pi_j\}$ is composed of all unitary operators $U$
that leave the set of SIC projectors invariant; that is,
$U\Pi_jU^\dag=\Pi_{\sigma(j)}$ for a suitable permutation $\sigma$.  By convention, operators that differ only by overall phase factors are identified, as indicated by the overline notation in "$\overline{G}$".  The \emph{extended symmetry group} is the larger group that  contains also antiunitary operators.
Every SIC known so far is  \emph{group covariant} in the sense that
it can be generated from a single state---the \emph{fiducial state}---by a group composed of
unitary operators, that is, the SIC projectors form a
single orbit under the action of the symmetry group. In addition,
every  known  SIC is covariant with respect to one or another version of the HW group \cite{Zhu12the}, and this is true for every SIC in dimension 3 \cite{HughS14}. In particular, every known SIC is  sharply covariant in the sense that the generating group can be chosen to have the minimal possible order, that is, the number of SIC projectors. By contrast, sharply covariant MUB are quite rare; actually, only two examples are known \cite{Zhu15Sh, Zhu15N}.

How much additional symmetry can we expect beyond group covariance?  To understand the significance of  this question, it is instructive to compare a SIC in dimension $d$ with the regular simplex in the operator space defined by the SIC \cite{ApplFZ15G}. The isometry group of the regular simplex is isomorphic to the full permutation group of $d^2$ letters, which can realize any permutation among the vertices.  However, most of these permutations cannot be realized by unitary or antiunitary transformations, those transformations that  leave the  quantum state space invariant, because the  state space is not a ball except in dimension 2.  A natural question to ask is to what extent  the permutation symmetry can be retained given this limitation. Such surviving  symmetry is closely related to the geometry of the quantum state space, so any progress concerning this subject may potentially lead to a clearer picture about the quantum state space.

To answer the question posed above, we need to introduce several new concepts.
A SIC   is \emph{$k$-covariant} if  every ordered $k$-tuple of SIC projectors can be mapped to every other such $k$-tuple within its symmetry group; that is, its symmetry group acts $k$-transitively in the language of permutation groups \cite{DixoM96book, Came99book}; see \ref{sec:PermuGroup} for a short introduction. A $k$-covariant SIC is also referred to as doubly covariant when $k=2$ and triply covariant when $k=3$.  As we shall see shortly, no triply covariant SICs can exist. For this reason doubly covariant SICs are called \emph{super-symmetric} since they represent the most symmetric structure that can appear in the Hilbert space.

To complete the picture, a SIC  is \emph{$k$-homogeneous} if  every unordered $k$-tuple of SIC projectors can be mapped to every other such $k$-tuple under its symmetry group. A SIC in dimension $d$ is $k$-homogeneous if and only if it is $(d^2-k)$-homogenous. In addition, any $k$-covariant SIC is $k$-homogeneous (the converse is also true when $k\leq d^2/2$ according to \lref{lem:homoSuper} in \sref{sec:equi}, but is not so obvious). When the symmetry group is replaced by the extended symmetry group, the terminologies are modified by adding the prefix "quasi". For example, a SIC is quasi-super-symmetric if every two ordered pairs of SIC projectors  can be mapped to each other under  its extended symmetry group.

The concepts introduced above also apply to operator bases and generalized measurements \cite{Zhu15P}. In this paper, however, we shall focus on SICs.

\subsection{\label{sec:HWCli}Heisenberg-Weyl groups and Clifford groups}
To make this paper self-contained, here we present a short introduction about HW groups and Clifford groups; see \rscite{BoltRW61I, BoltRW61II, Gott97the, Appl05, Zhu10, Zhu12the} for more details.

The HW group is
 generated by the phase operator $Z$ and the cyclic shift operator
$X$ defined by their action on the kets $|e_r\rangle$ of the
"computational basis",
\begin{equation}\label{eq:HW}
Z|e_r\rangle=\omega^r|e_r\rangle, \qquad X|e_r\rangle=
|e_{r+1}\rangle,
\end{equation}
where $\omega=\mathrm{e}^{2\pi \mathrm{i}/d}$,  $r\in \bbZ_d$,
and $\bbZ_d$ is the ring of integers modulo $d$. The two
generators obey the  canonical commutation relation
\begin{equation}\label{eq:CCR1}
XZX^{-1}Z^{-1}=\omega^{-1},
\end{equation}
which determines the HW group up to unitary equivalence and overall
phase factors \cite{Weyl31book}.

In this paper, it turns out that another version of the HW group, called the multipartite HW group, is more relevant to our study. This HW group~$\hw$ is defined  only in prime power dimensions. It coincides with the HW group defined above in every prime dimension $p$.
In prime power dimension $q=p^n$, the multipartite  HW group  is the tensor power of $n$ copies of the HW group in  dimension $p$.
The elements in the multipartite HW are called displacement operators (or Weyl operators). Up to phase factors, they can be labeled by  vectors of length $2n$ defined over $\mathbb{F}_p=\bbZ_p$ as
\begin{equation}
D_{\mu}= \tau^{\sum_j \mu_j \mu_{n+j} }\prod_{j=1}^n X_j^{\mu_j}  Z_j^{\mu_{n+j}},
\end{equation}
where  $\tau=-\rme^{\pi \rmi/p}$,  while $Z_j$ and $X_j$ are the phase operator and cyclic shift operator of the $j$th party, as defined  in \eref{eq:HW} with $d=p$.  These operators satisfy the commutation relation
\begin{equation}
D_{\mu}D_{\nu} D_{\mu}^\dag D_{\nu}^\dag  =\omega^{\langle\mu,\nu\rangle},
\end{equation}
where $\langle\mathbf{\mu},\mathbf{\nu}\rangle=\mu^\rmT J\nu$ is the symplectic product with
$J=\bigl(\begin{smallmatrix}0_n &-1_n\\ 1_n& 0_n
\end{smallmatrix}\bigr)$.
Two displacement operators $D_{\mu}$ and $D_{\nu}$ commute if and only if the corresponding symplectic product $\langle\mathbf{\mu},\mathbf{\nu}\rangle$ vanishes.  The vectors $\mu$ together with the symplectic product form a symplectic space $\bbF_p^{2n}$ of dimension $2n$. The group of linear transformations that preserve the symplectic product is known as the symplectic group and denoted by $\Sp{2n}{p}$ \cite{Tayl92book}, which has order
\begin{equation}\label{eq:SpOrder}
p^{n^2}\prod_{j=1}^n(p^{2j}-1).
\end{equation}

The multipartite HW group defines a faithful irreducible projective representation of an elementary abelian group, recall that a group is elementary abelian if it is abelian and all elements other than the identity have the same order, which is a prime.
The converse is shown in the following lemma.
\begin{lemma}\label{lem:ProjEA}
Suppose $H$ is an elementary abelian group of order $p^{2n}$ with $p$ a prime and $n$ a positive integer. Then  every faithful irreducible projective representation of $H$ has degree $p^n$ and the image  is (projectively) unitarily equivalent to the multipartite HW group in dimension~$p^n$.
\end{lemma}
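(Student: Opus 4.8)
The plan is to treat this as a finite, ``projective'' avatar of the Stone--von Neumann uniqueness theorem, proved via the commutator form attached to a projective representation together with the structure of twisted group algebras. Write the projective representation as a map $\rho\colon H\to U(\mathcal H)$ with $\rho(g)\rho(h)=\alpha(g,h)\,\rho(gh)$ for a normalized $2$-cocycle $\alpha\colon H\times H\to\mathbb C^\times$, and set $\beta(g,h):=\rho(g)\rho(h)\rho(g)^{-1}\rho(h)^{-1}$. Since $H$ is abelian, $\rho(gh)=\rho(hg)$ as operators, so $\beta(g,h)=\alpha(g,h)/\alpha(h,g)$ is a \emph{scalar}; it is bimultiplicative in each argument and antisymmetric, and since $H$ has exponent $p$ it takes values in the $p$-th roots of unity. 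Hence $\beta(g,h)=\omega^{B(g,h)}$ with $\omega=\rme^{2\pi\rmi/p}$ for a well-defined alternating $\mathbb F_p$-bilinear form $B$ on the $\mathbb F_p$-vector space $H\cong\mathbb F_p^{2n}$.

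First I would show that faithfulness forces $B$ to be nondegenerate: if $g$ lies in the radical of $B$ then $\rho(g)$ commutes with every $\rho(h)$, hence with the linear span of $\{\rho(h):h\in H\}$ — which is already an algebra, since each product of two such operators is a scalar times another one — so $\rho(g)$ is a scalar by Schur's lemma, and faithfulness of the projective representation then gives $g=e$. (In particular this re-confirms that the $\mathbb F_p$-dimension of $H$ must be even for a faithful irreducible projective representation to exist, consistent with the hypothesis.) A nondegenerate alternating form on $\mathbb F_p^{2n}$ admits a symplectic basis, so $(H,B)$ is linearly equivalent to the standard symplectic space underlying the multipartite HW group; I would fix an $\mathbb F_p$-linear isomorphism $\phi$ realizing this equivalence and replace $\rho$ by $\rho\circ\phi$, so that the commutator form becomes the standard symplectic product $\langle\cdot,\cdot\rangle$.

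Next I would pin down the projective class. The twisted group algebra $\mathbb C^{\alpha}[H]$ has dimension $|H|=p^{2n}$, is semisimple (characteristic zero), and — $H$ being abelian — has center equal to the span of those $\rho(g)$ with $g$ in the radical of $B$, hence equal to $\mathbb C I$ by nondegeneracy; a finite-dimensional semisimple $\mathbb C$-algebra with center $\mathbb C$ is a full matrix algebra, so $\mathbb C^{\alpha}[H]\cong M_{p^n}(\mathbb C)$. This already yields the degree claim, since a faithful irreducible projective representation of $H$ is an irreducible module over such an algebra and hence has dimension $p^n$. For the equivalence statement I would use that for an elementary abelian $p$-group the commutator map is an isomorphism from $H^2(H,\mathbb C^\times)$ onto the group of alternating $\mathbb F_p$-bilinear forms on $H$; thus $\rho\circ\phi$ and the multipartite HW representation $\hw$, having the same commutator form $\langle\cdot,\cdot\rangle$, have cohomologous cocycles, so after adjusting $\hw$ by phases (a projective equivalence) they are irreducible modules over one and the same matrix algebra and therefore unitarily equivalent — an invertible intertwiner $T$ exists by uniqueness of the simple module, and $(T^\dagger T)^{1/2}$ is a scalar, so $T$ may be taken unitary. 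Conjugating back through $\phi$ shows the image of $\rho$ in $\mathrm{PU}(\mathcal H)$ is unitarily conjugate to that of $\hw$, which is exactly the assertion.

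The hard part is precisely this uniqueness input: that a faithful irreducible projective representation of $H$ is determined up to projective unitary equivalence by its nondegenerate commutator form. The twisted-group-algebra/Schur-multiplier route above is the cleanest packaging; the only delicate point is $p=2$, where the natural central extension of $H$ carries a cyclic group of order $4$ rather than $p$, but this causes no trouble because the argument is phrased entirely in terms of $H^2(H,\mathbb C^\times)$ and alternating forms, which behave uniformly in $p$. A cohomology-free alternative is to choose a symplectic basis $a_1,\dots,a_n,b_1,\dots,b_n$, rescale the chosen lifts so that $A_i=\rho(a_i)$ and $B_i=\rho(b_i)$ have order $p$ (possible because $a_i^p=b_i^p=e$ makes $\rho(a_i)^p,\rho(b_i)^p$ scalars), observe that $\{A_i,B_i\}$ then satisfy precisely the Heisenberg--Weyl relations, and recover the representation by simultaneously diagonalizing the commuting family $\{A_i\}$ and noting that $\{B_i\}$ permutes the joint eigenlines simply transitively — but this merely re-derives the finite Stone--von Neumann theorem, so I would cite it rather than reproduce it.
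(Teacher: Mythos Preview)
Your argument is correct but takes a genuinely different route from the paper. The paper proceeds by an explicit inductive construction: starting from an arbitrary nontrivial $A_1$ it uses Schur's lemma to find a partner $B_1$ with $U_{A_1}U_{B_1}U_{A_1}^\dag U_{B_1}^\dag=\omega^{-1}$, then for $n>1$ picks $A_2$ outside $\langle A_1,B_1\rangle$, corrects it by an element of that subgroup so that $U_{A_2}$ commutes with $U_{A_1},U_{B_1}$, finds $B_2$, and so on, finally citing the multipartite Weyl theorem once the full canonical commutation relations are in hand. In other words, the paper builds a symplectic basis one hyperbolic pair at a time and delegates both the degree claim and the uniqueness claim to the Stone--von~Neumann statement it quotes at the end. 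Your approach instead packages the whole thing structurally: nondegeneracy of the commutator form via Schur, simplicity of the twisted group algebra $\mathbb C^\alpha[H]$ to get the degree $p^n$, and the identification of $H^2(H,\mathbb C^\times)$ with alternating $\mathbb F_p$-forms (for $H$ elementary abelian) to conclude uniqueness of the projective class. This buys a cleaner separation of concerns and avoids the inductive bookkeeping, at the cost of invoking the Schur-multiplier description and Wedderburn theory; the paper's version is more hands-on and essentially self-contained modulo Weyl's theorem. Amusingly, your closing ``cohomology-free alternative'' is exactly the paper's proof.
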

\begin{remark}
Although this lemma is known as folklore, it is difficult to find an explicit statement in the literature; see  \rscite{BoltRW61I, BoltRW61II, Morr73} for relevant conclusions. A self-contained proof is presented in the appendix.
\end{remark}

The Clifford group $\mathrm{C}$ is  the normalizer of the multipartite HW group, that is, the group composed of all unitary operators that leave the multipartite HW group invariant up  to phase factors \cite{BoltRW61I, BoltRW61II, Gott97the, Appl05, Zhu12the}. Any Clifford unitary $U$ induces a linear transformation on the symplectic space that labels the displacement operators. This linear transformation necessarily belongs to the symplectic group $\Sp{2n}{p}$ since conjugation preserves the commutation relation, so the induced linear transformation preserves the symplectic product.
Conversely, given any symplectic matrix $F$, there exist $q^2$ Clifford unitaries (up to phase factors) that induce  $F$, which differ from each other by displacement operators \cite{BoltRW61I,BoltRW61II}.
The quotient group $\pc/\phw$ ($\overline{G}$ denotes the group $G$ modulo phase factors) can be identified with the symplectic group $\Sp{2n}{p}$.
The Clifford group  $\pc$ in dimension $p^n$ has order
\begin{equation}\label{eq:CliffordOrder}
p^{n^2+2n}\prod_{j=1}^n(p^{2j}-1).
\end{equation}
Note that the symplectic group $\Sp{2}{p}$ in the special case $n=1$ can be identified with the special linear group $\SL{2}{p}$.   When $p$ is odd, $\pc$ is  isomorphic to the affine symplectic group $\ASp{2n}{p}=\Sp{2n}{p}\ltimes \bbF_p^{2n}$  \cite{BoltRW61I,BoltRW61II}.

\section{\label{sec:main}Main results}
After careful inspection of all  SICs known in the literature \cite{Zhu12the}, we find three  super-SICs (SICs that are super-symmetric). The first obvious candidate is the SIC in dimension~2, whose symmetry group can realize all even permutations among the SIC projectors (this SIC is also quasi-4-covariant, since the extended symmetry group can realize all permutations).

The second super-SIC is the \emph{Hesse SIC}  in dimension 3 \cite{Zaun11,Appl05,Zhu10,Zhu12the, Hugh07, Beng10, TabiA13, DangBBA13},  which is  generated by the HW group  from the fiducial ket
\begin{equation}\label{eq:GodSIC}
\frac{1}{\sqrt{2}}
(0, 1, -1)^\rmT.
\end{equation}
It is the only known SIC whose  symmetry group is the whole Clifford group, which has order $9\times 24$ in dimension 3 \cite{Zhu10}.
The Hesse SIC also has an intimate connection with the discrete Wigner function introduced by Wootters \cite{Woot87}. Let $\Pi_j$ be SIC projectors in the Hesse SIC, then  $1-2\Pi_j$ happen to be  phase point operators underlying the Wootters discrete Wigner function. This result is not a mere coincidence, but has deep reasons, as explained in \rcite{Zhu15P}.

The third  super-SIC is the set of \emph{Hoggar lines} \cite{Hogg98, Zaun11, Zhu12the} generated by the three-qubit Pauli group (the multipartite HW group in dimension 8) from
\begin{equation}\label{eq:HoggarLines}
\frac{1}{\sqrt{6}}(1+\rmi,0,-1,1,-\rmi,-1,0,0)^\rmT.
\end{equation}
It has an exceptionally large  symmetry group, which has order $64\times 6048$ \cite{Zhu12the}\footnote{The stabilizer of each fiducial state in the set of Hoggar lines is a nonabelian simple group of order 6048, which turns out to be isomorphic to the projective special unitary group $\mathrm{PSU}(3,3)$.}. It is the only SIC known so far that is not covariant with respect to the usual HW group defined in \eref{eq:HW} and also the only known SIC that is covariant with respect to the multipartite HW group in a dimension that is not prime \cite{Zhu12the}. In addition, since all SIC projectors are connected by local unitary transformations, they all have the same entanglement. Such SICs are interesting but quite rare in prime power dimensions, although there is another  example in dimension~4 \cite{ZhuTE10M, ZhuTE10T,Zhu12the}.

It turns out that the three super-SICs we have identified exhaust all  possibilities.
\begin{theorem}[CFSG]\label{thm:Super-SIC}
The SIC in dimension 2, the Hesse SIC in dimension~3, and the set of Hoggar lines in dimension~8 are the only three (quasi)-super-SICs up to unitary equivalence. They are also the only three (quasi)-2-homogeneous SICs up to unitary equivalence.
\end{theorem}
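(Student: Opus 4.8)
\emph{Reduction.} By \lref{lem:homoSuper}, a SIC in dimension $d$ is $k$-covariant if and only if it is $k$-homogeneous whenever $k\le d^2/2$, and the same equivalence holds for the ``quasi'' versions; since $2\le d^2/2$ for every $d\ge2$, it suffices to classify, up to unitary equivalence, the SICs whose symmetry group $\overline{G}$ (resp.\ extended symmetry group) acts $2$-transitively on the set of $d^2$ SIC projectors. So from now on ``super-SIC'' means exactly that.

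\emph{Triple-product obstruction (\sref{sec:tripleP}).} For distinct indices, $|\Tr(\Pi_i\Pi_j\Pi_k)|=|\langle\psi_i|\psi_j\rangle\langle\psi_j|\psi_k\rangle\langle\psi_k|\psi_i\rangle|=(d+1)^{-3/2}$, so a triple product is pinned down by its phase $\theta_{ijk}$, which is cyclically invariant in $(i,j,k)$ and reverses sign under reversal; conjugation by a unitary symmetry permutes the $\theta_{ijk}$ along the induced permutation of indices, while an antiunitary one additionally negates them. Since $\sum_j\Pi_j=d$, one has $\sum_{j\neq i,k}\Tr(\Pi_i\Pi_j\Pi_k)=(d-2)/(d+1)$ for $i\neq k$. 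If the SIC is super-symmetric, the stabiliser of the ordered pair $(i,k)$ permutes $\{\theta_{ijk}:j\neq i,k\}$, so by $2$-transitivity this multiset of phases is independent of the pair $(i,k)$ (and, in the quasi case, invariant under $\theta\mapsto-\theta$). Combining pair-independence with the displayed sum rule and the analogous relations for higher products yields a necessary arithmetic condition tying together $d$, $|\overline{G}|$, and the orbit lengths of a pair stabiliser; this is the engine for the remaining steps.

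\emph{Prime power dimensions (\sref{sec:superSICpp}), then all dimensions (\sref{sec:superSIC-HW}).} A $2$-transitive permutation group is, by the CFSG-based classification of such groups, either of affine type or almost simple type. In the affine case the permutation degree $d^2$ is a prime power, so $d=p^m$, the group has an elementary abelian regular normal subgroup of order $p^{2m}$, and by \lref{lem:ProjEA} its image is the multipartite HW group $\phw$ in dimension $p^m$; hence the super-SIC is HW-covariant, its triple-product phases are the explicit overlap phases of an HW-covariant SIC, and the residual symmetry $\overline{G}/\phw$ is a subgroup of $\Sp{2m}{p}$ acting $2$-transitively on $\bbF_p^{2m}$. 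Feeding this into the triple-product condition forces $(p,m)\in\{(2,1),(3,1),(2,3)\}$, and in each of these three dimensions the only SIC meeting the requirement is, respectively, the SIC in dimension $2$, the Hesse SIC, and the set of Hoggar lines. It therefore only remains to exclude the \emph{almost simple} case: here $\soc(\overline{G})$ is a nonabelian simple $2$-transitive group of degree $d^2$ carrying a faithful irreducible projective unitary representation of degree $d$ — the permutation degree being the \emph{square} of the representation degree — whose point stabiliser, fixing one SIC projector, lies in $\mathrm{P}(\mathrm{U}(1)\times\mathrm{U}(d-1))$. Running through the CFSG list of $2$-transitive almost simple groups ($A_n$ and $S_n$, $\PSL{k}{q}$ on projective space, $\Sp{2k}{2}$ on its orbits of quadratic forms, $\PSU{3}{q}$, the Suzuki and Ree groups, and the sporadic examples), the coincidence ``degree $=$ (projective degree)$^2$'' together with the representation theory of the point stabiliser and the step-2 constraint rules out every possibility; hence every super-SIC is HW-covariant and we are back in the prime power case, completing the proof (and, by the reduction, the $2$-homogeneous statement as well).

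\emph{Main obstacle.} The delicate part is the almost simple case: infinitely many families must be disposed of, and for each the elimination hinges on simultaneously exploiting a numerical coincidence, the structure of the point stabiliser as a one-line-fixing unitary group, and the triple-product inequalities — making this exhaustive while keeping it finite is where the real work lies. A secondary difficulty is sharpening the triple-product bookkeeping of step~2 enough to actually cut the affine/symplectic possibilities down to the three surviving dimensions.
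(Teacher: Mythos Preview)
Your architecture matches the paper's: reduce to $2$-transitivity via \lref{lem:homoSuper}, split via Burnside/CFSG into affine versus almost simple, show the affine case forces HW covariance in a prime power dimension via \lref{lem:ProjEA}, and then pin the dimension down to $2$, $3$, $8$. But two of your key mechanisms are not the ones the paper uses, and as written they are too vague to carry the argument.

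\textbf{The prime-power restriction.} Your step~2 promises ``a necessary arithmetic condition tying together $d$, $|\overline{G}|$, and the orbit lengths of a pair stabiliser'', and step~3 says ``feeding this into the triple-product condition forces $(p,m)\in\{(2,1),(3,1),(2,3)\}$''. You never say what the condition is, and the ingredients you list (pair-independence plus the sum rule) are not by themselves enough. The paper's engine is sharper and purely number-theoretic: from pair-independence one shows the multiset $\{\tilde T_{mjk}\}_m$ is conjugation-invariant, so its product is $\pm1$; then the cocycle identity $\tilde T_{jkl}=\prod_m$-factorised gives $\tilde T_{jkl}^{d^2}=\pm1$, i.e.\ every normalized triple product is a $2d^2$-th root of unity. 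The sum rule then says $(d-2)\sqrt{d+1}=\sum_{k\neq j,l}\tilde T_{jkl}\in\bbZ[\zeta_{2d^2}]$, hence $\sqrt{d+1}\in\bbQ[\zeta_{2d^2}]$, and a short Galois argument on the quadratic subfields of $\bbQ[\zeta_{2d^2}]$ forces $d\in\{3,8\}$ among prime powers $\ge3$. None of this uses HW structure, orbit lengths, or $|\overline{G}|$; in fact the paper deliberately proves this restriction (and all of \thref{thm:Super-SICpp}) \emph{without} CFSG, whereas your route invokes CFSG before you even reach it.

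\textbf{The almost-simple elimination.} You flag this as the ``main obstacle'' and plan to combine the square-degree coincidence with point-stabiliser representation theory and triple-product inequalities. The paper's argument is much shorter and needs none of the extra tools: first (\lref{lem:2transitiveAS}) one lists the almost-simple $2$-transitive groups of \emph{perfect-square} degree --- besides $A_n$, only $\PSL{2}{8}$, $\PSL{4}{7}$, $\PSL{5}{3}$, $\Sp{6}{2}$ survive, with degrees $3^2$, $20^2$, $11^2$, $6^2$ --- and then one simply compares the required representation degree $d=\sqrt{n}$ with the known \emph{minimal} degree of a faithful irreducible projective representation of the socle (from the tables in \cite{TiepZ96}). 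In every case the minimal degree already exceeds $d$, so no such SIC can exist; the alternating case is killed either by \lref{lem:3covariant} or by the same degree bound. No point-stabiliser analysis, no triple products.

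Finally, your proposal asserts uniqueness in each of $d=2,3,8$ without argument; in the paper this is done by explicit computation (eigenvectors of a specific order-$4$ Clifford element in $d=3$, and of a specific order-$7$ Clifford element in $d=8$), which is not entirely routine.
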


\begin{remark}
The full proof of \thref{thm:Super-SIC} relies on the classification of 2-transitive permutation groups \cite{Hupp57, Heri74,Heri85,DixoM96book, Came99book}, which in turn relies on the CFSG \cite{ConwCNP85,Wils09book}.
 To make room for future improvement, we mark all theorems and lemmas with "CFSG" whenever CFSG is involved in the proofs. In the case of prime power dimensions, nevertheless, we can prove \thref{thm:Super-SIC} without the CFSG; see \thref{thm:Super-SICpp} in \sref{sec:superSICpp}. In addition, we  try to prove as much as possible without resorting to such heavy tools; the CFSG is required directly only in the proof of \lref{lem:2transitiveAS} in \sref{sec:superSIC-HW}.
\end{remark}

In sharp contrast with SICs, all the operator bases underlying the Wootters discrete Wigner functions \cite{Woot87} are super-symmetric \cite{Zhu15P}. \Thref{thm:Super-SIC} shows that the restriction to pure states imposes a stringent limitation on the potential symmetry of an operator basis. This observation is instructive to understanding quasi-probability representations of quantum mechanics and may have implications for foundations studies, such as quantum Bayesianism \cite{FuchS13, ApplFZ14M}.

\section{\label{sec:equi}Equivalence of 2-homogeneous SICs and super-SICs}

Although the requirement of $k$-covariance is a priori much stronger than that of $k$-homogeneity whenever $k>1$, it turns out that the two requirements are equivalent for SICs as long as $k\leq d^2/2$.
\begin{lemma}\label{lem:homoSuper}
A SIC in dimension $d$ is (quasi)-$k$-covariant with $k\leq d^2/2$ if and only if it is (quasi)-$k$-homogeneous. In particular,
a SIC is (quasi)-super-symmetric if and only if it is (quasi)-2-homogeneous.
\end{lemma}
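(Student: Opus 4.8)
The plan is to establish the nontrivial direction --- that (quasi)-$k$-homogeneity forces (quasi)-$k$-covariance once $k\le d^2/2$ --- by reducing it to the classification of $k$-homogeneous permutation groups; the reverse implication is trivial, since a group acting transitively on the ordered $k$-tuples of distinct SIC projectors acts a fortiori transitively on the $k$-element subsets.

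First I would set up notation. Write $n=d^2$ for the number of SIC projectors. Because these projectors form an operator basis, the symmetry group (and likewise the extended symmetry group) acts faithfully on them; it may therefore be regarded simultaneously as a permutation group $G$ on $n$ points and as a subgroup of $\mrm{PU}(d)$ (respectively of $\mrm{PU}(d)$ enlarged by one antiunitary). The hypothesis that the SIC is (quasi)-$k$-homogeneous with $k\le n/2$ says precisely that $G$ is $k$-homogeneous. By the Livingstone--Wagner theorem together with Kantor's description of the $k$-homogeneous groups that fail to be $k$-transitive \cite{DixoM96book, Came99book} --- neither of which invokes the CFSG --- such a $G$ is automatically $k$-transitive except in a short list of configurations: for $k\ge5$ there are none at all; for $k=2$ one needs $n\equiv3\pmod{4}$ and $G\le\AGaL{1}{n}$; for $k=3$ one needs $n\in\{8,32\}$, or $n-1$ a prime power with $n-1\equiv3\pmod{4}$ and $\PSL{2}{n-1}\le G\le\PGaL{2}{n-1}$; and for $k=4$ one needs $n\in\{9,33\}$ with $\PGL{2}{n-1}\le G\le\PGaL{2}{n-1}$.

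The heart of the argument is then to eliminate every exceptional configuration using $n=d^2$ together with $k\le n/2$. The case $k=2$ is immediate, since $d^2\not\equiv3\pmod{4}$; specializing to $k=2$ (legitimate because $d^2/2\ge2$) already yields the "in particular" clause. For $k=3$: the integers $8$ and $32$ are not perfect squares, and $d^2=q+1$ with $q=(d-1)(d+1)$ a prime power forces $d-1=1$, hence $d=2$ and $q=3$, which contradicts $k\le d^2/2=2$. For $k=4$: $33$ is not a square, so only $n=9$, i.e. $d=3$, survives; but then $G$ would contain $\PGL{2}{8}\cong\PSL{2}{8}$ of order $504$, whereas this simple group admits no faithful projective representation of degree $3$ --- its smallest faithful irreducible representation has degree $7$ and its Schur multiplier is trivial --- so it cannot sit inside $\mrm{PU}(3)$, nor inside $\mrm{PU}(3)$ enlarged by one antiunitary (since $\PSL{2}{8}$ has no subgroup of index $2$). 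Thus $G$ is $k$-transitive whenever $k\le d^2/2$, which is exactly (quasi)-$k$-covariance.

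The main obstacle is organizational rather than conceptual: importing the precise statement of Kantor's classification and verifying, case by case, that "$n=d^2$" and "$k\le d^2/2$" together destroy every exception. The one step that goes beyond elementary arithmetic is the $d=3$, $k=4$ sub-case, where I must exclude an embedding $\PSL{2}{8}\hookrightarrow\mrm{PU}(3)$; I would do this from the character table of $\PSL{2}{8}$, or alternatively by appealing to the known classification of SICs in dimension $3$, which bounds the order of the symmetry group by $216<504$.
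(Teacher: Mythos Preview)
Your approach is essentially the paper's: both reduce to Kantor's classification of $k$-homogeneous groups that are not $k$-transitive, then eliminate each exception using $n=d^2$. The paper compresses the case analysis into a single quoted statement (Theorem~1 of Kantor together with Theorem~9.4B of \cite{DixoM96book}) and lands on the same sole surviving case, a $4$-homogeneous group of degree~$9$ containing $\PSL{2}{8}$.

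One small slip to fix in your $k=3$ case: the assertion ``$q=(d-1)(d+1)$ a prime power forces $d-1=1$'' is not quite right, since $d=3$ gives $q=8$, also a prime power. You do state the extra Kantor condition $q\equiv 3\pmod 4$, and that is what actually kills $d=3$ here (as $8\equiv 0\pmod 4$); just make the elimination explicit.

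For the $d=3$, $k=4$ endgame the two arguments differ slightly. You exclude $\PSL{2}{8}\hookrightarrow\mrm{PU}(3)$ directly from its character table (minimal faithful degree~$7$, trivial Schur multiplier), which is clean and self-contained. The paper instead observes that a (quasi)-$4$-homogeneous SIC is already (quasi)-triply covariant, hence group covariant; in dimension~$3$ every such SIC is HW-covariant with (extended) symmetry group inside the (extended) Clifford group, which is solvable and so cannot contain $\PSL{2}{8}$. The paper also notes an alternative via its \lref{lem:3covariantQuasi}. Your route avoids importing the dimension-$3$ SIC classification; the paper's route avoids looking up a character table. Either is fine.
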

\begin{proof}The claims are obvious when $k=1$. When $k\geq2$, according to Theorem 1 of Kantor~\cite{Kant72} and Theorem 9.4B in \rcite{DixoM96book},  any $k$-homogenous permutation group  of degree $n\geq2k$ is ($k-1$)-transitive. If $n$ is a perfect square, then the group
 is also $k$-transitive, except possibly some  4-homogenous permutation group $G$ of degree~9.  In addition, such a group $G$ must contain a subgroup isomorphic to $\PSL{2}{8}$, which is a nonabelian simple group. These facts confirm the lemma immediately except when the dimension is equal to 3 and the SIC is (quasi)-4-homogeneous. In this special case, the SIC is (quasi)-triply covariant and is thus group covariant.
 Any group covariant   SIC in dimension 3 is covariant with respect to the HW group, and its extended symmetry group is a subgroup of the extended Clifford group~\cite{Zhu10}; actually, this is  true for any SIC in dimension 3 according to \rcite{HughS14}.  However, the extended Clifford group  in dimension 3 is solvable and thus cannot contain any subgroup isomorphic to $\PSL{2}{8}$. Alternatively, this special case can be excluded by \lref{lem:3covariantQuasi} in \sref{sec:tripleP}.
\end{proof}
In view of \lref{lem:homoSuper},
the second part of  \thref{thm:Super-SIC} is an immediate consequence of the first part. Therefore, we can focus on (quasi)-super-SICs in the rest of the paper.

When the dimension is even, we have another equivalence relation.
\begin{lemma}\label{lem:quasi-super-SIC}
Every quasi-super-SIC in an even dimension is a super-SIC.
\end{lemma}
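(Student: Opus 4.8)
The plan is to argue that the antiunitary part of the extended symmetry group cannot ``help'' realize the required 2-transitivity, by exploiting a parity obstruction that is available precisely when $d$ is even. Recall that the extended symmetry group $\widetilde{G}$ of a SIC contains its symmetry group $\overline{G}$ (the unitary part) as an index-at-most-2 subgroup; the nontrivial coset, if present, consists of antiunitary operators. If $\widetilde{G}=\overline{G}$ there is nothing to prove, so assume the index is exactly $2$. Under the permutation action on the $d^2$ SIC projectors, $\widetilde{G}$ maps to a permutation group $\widetilde{P}$ of degree $d^2$, with $\overline{G}$ mapping onto a subgroup $P$ of index $1$ or $2$. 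Quasi-super-symmetry says $\widetilde{P}$ is $2$-transitive; I want to conclude $P$ is already $2$-transitive.

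First I would reduce to a statement about a single antiunitary operator. Fix any antiunitary element $V$ of $\widetilde{G}$; then $V$ has the form $V=U_0 K$ for some unitary $U_0$ and the complex-conjugation $K$ taken with respect to the computational basis. The key computation is to pin down the cycle type (or at least the parity) of the permutation induced by $K$ itself, i.e. the permutation $\Pi_j\mapsto \overline{\Pi_j}$ (entrywise conjugate). Here is where evenness of $d$ enters: I would look at the action of complex conjugation composed with a suitable fixed unitary — concretely, use the known fact (established in \sref{sec:equi} and the Clifford-group background of \sref{sec:HWCli}) that for the three dimensions of interest, or more generally whenever the SIC is group covariant, the extended symmetry group sits inside the extended Clifford group, and the antiunitary coset is represented by $K$ times a Clifford unitary. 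The induced permutation of such an antiunitary on the HW-covariant SIC is, up to the unitary part, the map $D_\mu \mapsto D_{-\mu}$-type involution on the index set, whose fixed points and transpositions can be counted explicitly. The point I am aiming for: when $d$ is even this involution is an \emph{even} permutation (or, more robustly, it lies in the subgroup generated by the squares, equivalently in $P$ whenever $P$ has index $2$ in $\widetilde P$), so composing with it does not enlarge the orbit structure on ordered pairs — hence $P$ itself is already $2$-transitive, i.e. the SIC is super-symmetric.

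The main obstacle is the counting step: showing that the parity (or, more precisely, the coset in $\widetilde P/P$) of the permutation induced by the representative antiunitary is \emph{trivial} exactly when $d$ is even. I expect this to split into the generic group-covariant case — where one transfers the computation to the index group $\bbZ_d^2$ (or $\bbF_p^{2n}$), and an affine/antisymplectic map there has an explicitly computable number of fixed points and $2$-cycles, the $2$-cycle count being even iff $d$ is even — and a short case check for any SIC not known to be HW-covariant, which by \lref{lem:homoSuper} and the preceding reductions we may in fact assume does not occur in the relevant range. A clean way to package the argument is: (i) note $\widetilde P$ is $2$-transitive of even-square degree $d^2$; (ii) if $P=\widetilde P$ we are done; (iii) otherwise $[\widetilde P:P]=2$, and every element of $\widetilde P\setminus P$ is odd, so it suffices to exhibit \emph{one} antiunitary whose induced permutation is even — and the explicit HW/Clifford representative above does the job precisely because $d$ is even. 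I would be careful to phrase (iii) so it does not secretly assume $2$-transitivity of $P$, which is what we are proving; the logic is rather that a $2$-transitive group of index $2$ over a subgroup forces that subgroup to be $2$-transitive \emph{if} the subgroup already meets every point-stabilizer's orbit, which is exactly the content of the parity computation. That interplay between the permutation-group bookkeeping and the dimension-parity of the explicit antiunitary is where the real work lies.
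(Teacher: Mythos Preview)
Your proposal has a genuine gap. The core of your argument is step (iii): ``$[\widetilde P:P]=2$, and every element of $\widetilde P\setminus P$ is odd, so it suffices to exhibit one antiunitary whose induced permutation is even.'' But the premise is false in general: an index-$2$ subgroup $P$ of $\widetilde P$ need not be $\widetilde P\cap A_{d^2}$, so there is no reason for the nontrivial coset to consist of odd permutations. Exhibiting an even antiunitary permutation therefore says nothing about whether it lies in $P$. You notice something is wrong in your final paragraph, but the proposed fix (``the subgroup already meets every point-stabilizer's orbit, which is exactly the content of the parity computation'') conflates permutation parity with orbit structure; these are unrelated.

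There is also a circularity problem: you invoke HW-covariance and the extended Clifford group to get an explicit antiunitary representative, but in the paper's logic HW-covariance of a quasi-super-SIC is established only later (\thref{thm:SuperSIC-HW}), and the proof of \thref{thm:Super-SICpp} in dimension~$8$ already uses the present lemma.

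The paper's argument is far more elementary and purely permutation-theoretic: if $H\leq G$ has index~$2$ and $G$ is $2$-transitive on $\Omega$ with $|\Omega|$ even, then $H$ is transitive (normal in a primitive group), and the point stabilizer $H_\alpha$ has index~$2$ in $G_\alpha$, hence at most two orbits of equal size on $\Omega\setminus\{\alpha\}$. But $|\Omega\setminus\{\alpha\}|=d^2-1$ is \emph{odd}, so two equal orbits are impossible and $H_\alpha$ is transitive on the rest; thus $H$ is $2$-transitive. The evenness of $d$ enters only through $d^2-1$ being odd --- no parity of permutations, no structure theory, no covariance assumption.
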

\begin{remark}
Although this lemma also holds when the dimension is odd, so far we can prove this conclusion only after the classification of all quasi-super-SICs.
\end{remark}
\Lref{lem:quasi-super-SIC} is an immediate consequence of the following lemma, note that the symmetry group of a SIC is a subgroup of the extended symmetry group of index at most 2.
\begin{lemma}
Any index-2 subgroup of a 2-transitive permutation group of even degree at least 4 is 2-transitive.
\end{lemma}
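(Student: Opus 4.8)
The plan is to reduce the assertion to a parity argument about a point stabilizer. Let $G$ act $2$-transitively on a set $\Omega$ with $n:=|\Omega|$ even and $n\geq 4$, and let $H\leq G$ with $[G:H]=2$. The first step is to show that $H$ is transitive on $\Omega$. Since $G$ is $2$-transitive it is primitive, and $H$ is a normal subgroup of $G$ (every index-$2$ subgroup is normal) which is nontrivial because $|G|\geq n(n-1)\geq 12$. A nontrivial normal subgroup of a primitive permutation group is transitive, so $H$ is transitive on $\Omega$.

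Next I would pass to stabilizers. Fix $\alpha\in\Omega$ and let $G_\alpha$ and $H_\alpha:=H\cap G_\alpha$ be the stabilizers of $\alpha$ in $G$ and in $H$. Since $G$ and $H$ are both transitive, the orbit--stabilizer relation gives $|G|=n\,|G_\alpha|$ and $|H|=n\,|H_\alpha|$; together with $|G|=2|H|$ this forces $[G_\alpha:H_\alpha]=2$, so $H_\alpha$ is an index-$2$ (hence normal) subgroup of $G_\alpha$. By $2$-transitivity of $G$, the group $G_\alpha$ acts transitively on $\Omega\setminus\{\alpha\}$, which has the odd cardinality $n-1$.

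The decisive step is then immediate: because $H_\alpha\trianglelefteq G_\alpha$ and $G_\alpha$ is transitive on $\Omega\setminus\{\alpha\}$, all $H_\alpha$-orbits on $\Omega\setminus\{\alpha\}$ have a common size and their number divides $[G_\alpha:H_\alpha]=2$. Hence there is either one such orbit or two; two is impossible, since then each would have size $(n-1)/2$, which is not an integer as $n-1$ is odd. Therefore $H_\alpha$ is transitive on $\Omega\setminus\{\alpha\}$. Since $H$ is transitive on $\Omega$ and its point stabilizer $H_\alpha$ is transitive on the complement of that point, $H$ is $2$-transitive on $\Omega$, which completes the proof.

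I do not anticipate a real obstacle: the argument only uses standard permutation-group facts (primitivity of $2$-transitive groups, transitivity of nontrivial normal subgroups of primitive groups, and divisibility of the number of $N$-orbits by $[G:N]$ when $N\trianglelefteq G$ acts on a transitive $G$-set). The hypotheses enter only twice --- degree $\geq 4$ guarantees $H\neq 1$ (and rules out the genuine exception $G=S_2$), and even degree makes $n-1$ odd, which is precisely what kills the two-orbit case.
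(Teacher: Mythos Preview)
Your proof is correct and follows essentially the same route as the paper: show $H$ is transitive (via normality in the primitive group $G$), deduce that $H_\alpha$ has index $2$ in $G_\alpha$, and then rule out two equal $H_\alpha$-orbits on $\Omega\setminus\{\alpha\}$ by the parity of $n-1$. If anything, you spell out the justifications (primitivity of $G$, divisibility of the orbit count by $[G_\alpha:H_\alpha]$) more explicitly than the paper does.
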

\begin{proof}
Suppose $H$ is an index-2 subgroup of the 2-transitive permutation group $G$ on $\Omega$ with $|\Omega|\geq 4$.
Then $H$ is normal in $G$ and acts transitively on $\Omega$.
Let $S\in G$ be the stabilizer of a given point  in $\Omega$, say $\alpha$,  then $S$ acts transitively on $\Omega\setminus \alpha$. Let $R=S\cap H$; then $R$ is an index-2 normal subgroup of $S$. Therefore, $R$ is either transitive  on $\Omega\setminus \alpha$ or has two orbits of equal length. When the degree $|\Omega|$  is even, however, the latter scenario cannot happen since the cardinality of $\Omega\setminus \alpha$ is odd. So $R$ is transitive on $\Omega\setminus \alpha$, and $H$ is 2-transitive on $\Omega$.
\end{proof}

\section{\label{sec:tripleP}Symmetry and triple products}

Before proving our main result, we first show that no triply covariant SICs can exist and that for prime-power dimensions, (quasi)-super-SICs can only exist in dimensions 2, 3, and 8. The proofs are  based on simple observation on triple products among SIC projectors,
\begin{equation}
T_{jkl}=\tr(\Pi_j\Pi_k\Pi_l).
\end{equation}
Such triple products  have played an important role in studying the symmetry of and equivalence relations among SICs \cite{ApplFF11,Zhu10,Zhu12the}. They are also useful to studying discrete Wigner functions \cite{Woot87}.
Note that $|T_{jkl}|$ is equal to $(d+1)^{-3/2}$ if the three indices are distinct, while it is equal to $1/(d+1)$ or 1 if two or three indices coincide.
The normalized triple products  $\tilde{T}_{jkl}=T_{jkl}/|T_{jkl}|$ satisfy
\begin{gather}
\tilde{T}_{jkl}=\tilde{T}_{klj}=\tilde{T}_{ljk}=\tilde{T}_{jlk}^*=\tilde{T}_{lkj}^*=\tilde{T}_{kjl}^*, \label{eq:TripleProdP1}\\
\tilde{T}_{jkl}=\tilde{T}_{mjk}\tilde{T}_{mkl}\tilde{T}_{mlj}.\label{eq:TripleProdP2}
\end{gather}

\begin{lemma}\label{lem:3covariant}
No SIC is triply covariant.
\end{lemma}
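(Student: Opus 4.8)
The plan is to argue by contradiction, using only the defining relation \eqref{eq:SICinner}, the resolution of the identity $\sum_l\Pi_l=d$, and the fact recalled just above that $|T_{jkl}|=(d+1)^{-3/2}$ when $j,k,l$ are pairwise distinct; in particular no classification of permutation groups is needed here. First I would note that if a SIC $\{\Pi_j\}$ in dimension $d$ were triply covariant, then its symmetry group $\overline G$ would act $3$-transitively on the $d^2$ projectors (so $d^2\geq 3$). Since conjugation by a unitary preserves the trace of a product, $U\Pi_jU^\dag=\Pi_{\sigma(j)}$ implies $T_{\sigma(j)\sigma(k)\sigma(l)}=\tr(U\Pi_j\Pi_k\Pi_lU^\dag)=T_{jkl}$, so $T_{jkl}$ is constant on each $\overline G$-orbit of ordered triples. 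By $3$-transitivity there is then a single number $t$ with $T_{jkl}=t$ for all pairwise-distinct $j,k,l$, and necessarily $|t|=(d+1)^{-3/2}$.

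Next I would obtain a second, independent expression for $t$ by summing over the third index. Fixing distinct $j,k$ and using $\sum_l\Pi_l=d$ together with $\tr(\Pi_j\Pi_k)=1/(d+1)$,
\begin{equation}
\sum_{l=1}^{d^2}T_{jkl}=\tr\!\Big(\Pi_j\Pi_k\sum_l\Pi_l\Big)=d\,\tr(\Pi_j\Pi_k)=\frac{d}{d+1}.
\end{equation}
On the left, the two terms with $l\in\{j,k\}$ each equal $\tr(\Pi_j\Pi_k)=1/(d+1)$ (by cyclicity and $\Pi_j^2=\Pi_j$), while the remaining $d^2-2$ terms each equal $t$; hence $2/(d+1)+(d^2-2)t=d/(d+1)$, and since $d^2-2\geq 2>0$ this solves to
\begin{equation}
t=\frac{d-2}{(d+1)(d^2-2)}.
\end{equation}

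Finally I would confront the two descriptions: $|t|=(d+1)^{-3/2}$ and $t=(d-2)/[(d+1)(d^2-2)]$ together force, after squaring and clearing denominators, $(d-2)^2(d+1)=(d^2-2)^2$, which simplifies to $d^2(d^2-d-1)=0$ and therefore has no solution with $d\geq 2$ (for $d=2$ one sees directly $t=0\neq(d+1)^{-3/2}$). This contradiction proves the lemma. I expect no real obstacle: the only points needing a word of care are that $3$-transitivity makes $T_{jkl}$ globally constant on distinct triples and that $d^2-2\neq 0$, both immediate. A variant would instead deduce that the normalized product $\tilde T_{jkl}$ equals $\pm1$ on distinct triples from the cocycle identities \eqref{eq:TripleProdP1}--\eqref{eq:TripleProdP2}, but it would still need the summation identity above to finish, so the route sketched here seems the most economical.
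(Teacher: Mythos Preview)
Your proof is correct and follows essentially the same route as the paper: both exploit the identity $\sum_l T_{jkl}=d/(d+1)$ for $j\neq k$ and compare it with the value forced by 3-transitivity. The only cosmetic difference is that the paper first observes (via \eqref{eq:TripleProdP1}) that the common value $\tilde T_{jkl}$ must be $\pm1$, so $t=\pm(d+1)^{-3/2}$, and then notes the resulting sum cannot equal $d/(d+1)$; you instead solve for $t$ from the sum and confront it with $|t|=(d+1)^{-3/2}$, which leads to the same equation $(d-2)^2(d+1)=(d^2-2)^2$ and the same contradiction.
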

\begin{remark}
If a SIC is triply covariant, then all triple products among distinct SIC projectors are equal. Consequently, all permutations among SIC projectors can be realized by unitary transformations according to Theorem~3 in \rcite{ApplFF11} (see also Chap.~10 in \rcite{Zhu12the}). Such a SIC would be too symmetric to exist!
\end{remark}
\begin{proof}
 Suppose on the contrary that
the SIC $\{\Pi_j\}$  is triply covariant. Then each $\tilde{T}_{jkl}$ for distinct $j,k,l$  equals  $\pm 1$, where the sign is independent of $j,k,l$, so
\begin{equation}
\sum _{l} T_{jkl}=\pm \frac{d^2-2}{(d+1)^{3/2}}+\frac{2}{d+1}, \quad  j\neq k.
\end{equation}
Since the SIC projectors $\Pi_l$ sum up to $d$ times of the identity, we also have
\begin{equation}
\sum _{l} T_{jkl}=d\tr(\Pi_j\Pi_k)=\frac{d}{d+1}.
\end{equation}
However, the  two equations above  cannot be satisfied simultaneously. This contradiction completes the proof.
\end{proof}

\begin{lemma}\label{lem:3covariantQuasi}
No SIC in dimension   $d\geq 3$ is quasi-triply covariant.
\end{lemma}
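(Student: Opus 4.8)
The plan is to push the triple-product argument of \lref{lem:3covariant} one step further, the new feature being that antiunitary symmetries \emph{complex-conjugate} triple products rather than fixing them. Suppose, for contradiction, that a SIC $\{\Pi_j\}$ in dimension $d\geq 3$ is quasi-triply covariant, so that its extended symmetry group acts $3$-transitively on the $d^2$ projectors. Conjugation by a unitary leaves $T_{jkl}=\tr(\Pi_j\Pi_k\Pi_l)$ invariant, while conjugation by an antiunitary sends it to $\overline{T_{jkl}}$; combined with $3$-transitivity this forces every normalized triple product $\tilde T_{jkl}$ with distinct indices to equal $\zeta$ or $\bar\zeta$, where $\zeta:=\tilde T_{123}$ has modulus one. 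Throughout I would use the elementary identity $\sum_{l\neq j,k}T_{jkl}=\frac{d-2}{d+1}$ for $j\neq k$, which follows from $\sum_l\Pi_l=dI$ together with $T_{jkj}=T_{jkk}=\tr(\Pi_j\Pi_k)=\frac{1}{d+1}$; note that this right-hand side is real and nonzero precisely when $d\geq 3$, which is where the hypothesis enters.

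The argument then splits on the value of $\zeta$. If $\zeta$ is real, then $\zeta=\pm1$ and all triple products among distinct indices equal this single value, so the contradiction in the proof of \lref{lem:3covariant} applies verbatim. If $\zeta=\pm\rmi$, then every $\tilde T_{jkl}$ with distinct indices lies in $\{\rmi,-\rmi\}$, whence $\sum_{l\neq j,k}T_{jkl}$ is purely imaginary, contradicting the identity above. The remaining case — $\zeta$ non-real with $\zeta\neq\pm\rmi$, equivalently $\zeta^2\neq1$ and $\zeta^4\neq1$ — is the heart of the matter.

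For that case the idea is to encode the triple products with a fixed first index as a tournament. Fix $m$ and, for distinct $a,b\neq m$, draw an arc $a\to b$ whenever $\tilde T_{mab}=\zeta$; this is a well-defined tournament on the remaining $d^2-1$ vertices, since $\tilde T_{mba}=\overline{\tilde T_{mab}}$ by \eqref{eq:TripleProdP1} and $\zeta\neq\bar\zeta$. This tournament has no directed $3$-cycle: a cycle $j\to k\to l\to j$ would give $\tilde T_{mjk}=\tilde T_{mkl}=\tilde T_{mlj}=\zeta$, so \eqref{eq:TripleProdP2} would force $\tilde T_{jkl}=\zeta^3$, which is impossible because $\zeta^3\notin\{\zeta,\bar\zeta\}$ under the standing assumptions on $\zeta$. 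A tournament without directed triangles is a strict total order, hence has a maximum vertex $M$; taking $m=j$, this yields $\tilde T_{jMl}=\bar\zeta$ for \emph{every} $l\notin\{j,M\}$. Therefore $\sum_{l\neq j,M}T_{jMl}=(d^2-2)\bar\zeta\,(d+1)^{-3/2}$ is non-real, contradicting $\sum_{l\neq j,M}T_{jMl}=\frac{d-2}{d+1}$, and the proof is complete.

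I expect the main obstacle to be exactly the appearance of a genuinely complex $\zeta$: in \lref{lem:3covariant} the triple products are pinned to $\pm1$, but antiunitaries permit a conjugate pair $\{\zeta,\bar\zeta\}$, and a naive summation no longer gives a contradiction. The tournament device is what breaks this symmetry — it manufactures a pair of indices $(j,M)$ for which all the relevant triple products collapse to the single value $\bar\zeta$, which is then incompatible with the row-sum identity. A minor but necessary side point is the sub-case $\zeta=\pm\rmi$, where $\zeta^3=\bar\zeta$ and the $3$-cycle obstruction degenerates; it must be ruled out separately beforehand, which the imaginary-versus-real comparison does at once.
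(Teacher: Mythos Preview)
Your proof is correct, and the tournament device in the case $\zeta^4\neq 1$ is a genuinely different route from the paper's. Both arguments share the same opening (reduce to $\tilde T_{jkl}\in\{\zeta,\bar\zeta\}$, dispose of $\zeta=\pm 1$ via \lref{lem:3covariant}, and of $\zeta=\pm\rmi$ via the imaginary-versus-real comparison), but diverge in the main case. The paper first argues that a quasi-triply covariant SIC is in fact super-symmetric, so the multiset $\{\tilde T_{mjk}\}_m$ is conjugation-invariant and independent of $(j,k)$; this forces exactly $(d^2-2)/2$ copies each of $t$ and $t^*$, and a counting of how many $t$'s appear across the three multisets attached to a fixed triple (via \eqref{eq:TripleProdP2}) yields the bound $2(d^2-3)\leq 3(d^2-2)/2$, i.e.\ $d^2\leq 6$. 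Your argument bypasses the reduction to unitary double covariance entirely: the no-3-cycle observation (again from \eqref{eq:TripleProdP2}) makes the tournament transitive, which \emph{manufactures} a pair $(j,M)$ with all $\tilde T_{jMl}$ equal to a single non-real value, and the row-sum identity kills it directly. Your approach is more self-contained and conceptually cleaner; the paper's approach, on the other hand, extracts structural information (the balanced multiset) that is of independent interest in the broader context of the article. One cosmetic remark: with your arrow convention the ``maximum'' $M$ actually gives $\tilde T_{jMl}=\zeta$ rather than $\bar\zeta$ (the sink would give $\bar\zeta$), but this is immaterial to the contradiction.
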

Since this lemma is not essential in proving our main result, the proof is relegated to the appendix.

\begin{lemma}\label{lem:SuperSICpp}
Suppose there exists a quasi-super-SIC in prime-power dimension~$d$. Then $d$ equals 2, 3, or 8.
\end{lemma}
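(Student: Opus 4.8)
The plan is to exploit the rigidity of triple products together with the fact that quasi-super-symmetry (2-transitivity of the extended symmetry group on the $d^2$ SIC projectors) forces a great deal of homogeneity on the phases $\tilde T_{jkl}$. First I would fix a prime power $d=p^n$ and suppose a quasi-super-SIC exists. Since the extended symmetry group acts 2-transitively on the set of $d^2$ projectors, the stabilizer of a single projector $\Pi_0$ acts transitively on the remaining $d^2-1$; and 2-transitivity also implies that the group permutes the ordered pairs $(j,k)$ of distinct indices transitively. The key observation is that a (anti)unitary symmetry $U$ sends $T_{jkl}$ to $T_{\sigma(j)\sigma(k)\sigma(l)}$ (or its complex conjugate, for antiunitaries), so the multiset of normalized triple products $\{\tilde T_{0kl} : k,l \neq 0 \text{ distinct}\}$ is, up to complex conjugation, independent of which projector plays the role of $\Pi_0$. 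Combined with the cocycle relation \eqref{eq:TripleProdP2}, this severely constrains the possible values.

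Second, I would extract a counting identity. Summing $T_{jkl}$ over $l$ (including the degenerate terms $l=j,k$) gives $\sum_l T_{jkl} = d\,\tr(\Pi_j\Pi_k) = d/(d+1)$ for $j\neq k$, exactly as in the proof of \lref{lem:3covariant}. The degenerate terms contribute $2/(d+1)$, so $\sum_{l\neq j,k}\tilde T_{jkl} = (d-2)/(d+1)^{3/2}\cdot(d+1) = (d-2)/\sqrt{d+1}$ — wait, more carefully: $\sum_{l \neq j,k} T_{jkl} = d/(d+1) - 2/(d+1) = (d-2)/(d+1)$, and since each $|T_{jkl}| = (d+1)^{-3/2}$ we get $\sum_{l\neq j,k}\tilde T_{jkl} = (d-2)/\sqrt{d+1}$. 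This is a real number, so the imaginary parts of the $\tilde T_{jkl}$ cancel in the sum; more importantly, its value must be consistent with the homogeneity forced by 2-transitivity. In particular, if the symmetry group were even 3-transitive all the $\tilde T_{jkl}$ would coincide and we would recover the contradiction of \lref{lem:3covariant}; since we only have 2-transitivity, the set $S = \{\tilde T_{jkl}\}$ is a fixed finite subset of the unit circle, closed under complex conjugation and under the $S_3$-action of \eqref{eq:TripleProdP1}, and the number of triples $(k,l)$ attaining each value is controlled by the orbit structure of the point stabilizer. I would then argue — this is the technical heart — that $S$ must actually be very small (the natural candidates being $\{\pm 1\}$, the sixth roots of unity, or a single Galois orbit), and that pinning down $S$ together with the counting identity above and the multiplicativity \eqref{eq:TripleProdP2} forces $d+1$ to divide a small integer or $d$ itself to be tiny.

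Third — and this is where I expect the real obstacle — I would need to turn "2-transitivity in a prime power dimension" into genuine arithmetic leverage. The clean route is to invoke the HW-group structure available in prime power dimensions (to be established later in \sref{sec:superSICpp} and \sref{sec:superSIC-HW}): a 2-transitive group of degree $d^2 = p^{2n}$ acting on what will turn out to be an affine-type geometry has its socle an elementary abelian $p$-group of order $p^{2n}$ only for very restricted $(p,n)$, by the classification of affine 2-transitive groups — but the point of this lemma is to avoid the full CFSG, so instead I would lean on the triple-product arithmetic alone. Concretely, the plan is: show the point-stabilizer acts transitively on $d^2-1$ points, so $d^2-1$ divides $|G|/|G_0|$ trivially but, more usefully, the orbit of a fixed ordered pair under $G_0$-type reasoning forces the number of distinct values of $\tilde T_{0kl}$ to divide into equal blocks; feed this into $\sum_{l}\tilde T_{0kl} = (d-2)/\sqrt{d+1}$; and observe that $\sqrt{d+1}$ being "nearly rational" in the relevant cyclotomic field $\mathbb{Q}(\omega)$ — together with $d$ a prime power — leaves only $d \in \{2,3,8\}$ (where $d+1 = 3, 4, 9$ are the perfect squares, making $\sqrt{d+1}$ rational). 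I would double-check the borderline cases $d=2,3,8$ by exhibiting the known super-SICs (the qubit SIC, the Hesse SIC, the Hoggar lines) so that the lemma is an exact characterization, not merely a necessary condition. The main difficulty is making the "number of distinct triple-product values is small" step rigorous without circularity; I would handle it by a direct Gram-matrix / frame-potential computation showing that if the stabilizer of $\Pi_0$ is transitive on the other projectors then all $\tilde T_{0kl}$ with $k,l$ ranging over that orbit take values in a single $\Gal(\mathbb{Q}(\omega)/\mathbb{Q})$-orbit, after which the field-theoretic estimate closes the argument.
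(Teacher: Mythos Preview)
Your overall direction---exploit the triple products and the summation identity---matches the paper's, but two essential steps are missing, and what you offer in their place does not work.

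First, a computational slip: since $T_{jkl}=(d+1)^{-3/2}\tilde T_{jkl}$ for distinct indices, the identity $\sum_{l\neq j,k}T_{jkl}=(d-2)/(d+1)$ gives
\[
\sum_{l\neq j,k}\tilde T_{jkl}=(d-2)\sqrt{d+1},
\]
not $(d-2)/\sqrt{d+1}$. This matters because the whole point is that $\sqrt{d+1}$ lands in a specific number ring.

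Second, and this is the real gap: you never establish in which field or ring the $\tilde T_{jkl}$ live. Your proposed mechanism---that the point stabilizer acting transitively forces all $\tilde T_{0kl}$ into a single $\Gal(\bbQ(\omega)/\bbQ)$-orbit---is not justified; (anti)unitary symmetries of the SIC send triple products to other triple products or to their complex conjugates, not to arbitrary Galois conjugates. The paper's argument here is quite different and uses the cocycle relation \eqref{eq:TripleProdP2} in a specific way: 2-transitivity (or quasi-2-homogeneity) makes the multiset $\{\tilde T_{mjk}\}_{m}$ conjugation-invariant, so $\prod_m \tilde T_{mjk}=\pm1$; taking the product of \eqref{eq:TripleProdP2} over $m$ then yields $\tilde T_{jkl}^{d^2}=\pm1$, i.e.\ every $\tilde T_{jkl}$ is a $2d^2$-th root of unity. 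Only then does the sum identity place $(d-2)\sqrt{d+1}$ in $\bbZ[\zeta_{2d^2}]$ and hence $\sqrt{d+1}\in\bbQ[\zeta_{2d^2}]$.

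Third, your endgame is also off. You assert that the arithmetic forces $d+1$ to be a perfect square, but this is neither what is needed nor what is true: for $d=2$ one has $d+1=3$, not a square, so your stated criterion would wrongly exclude the qubit SIC. What the paper actually proves is that for a prime power $d\geq3$ the condition $\sqrt{d+1}\in\bbQ[\zeta_{2d^2}]$ holds only for $d=3,8$. This requires identifying the quadratic subfields of $\bbQ[\zeta_{2d^2}]$: for $d$ a power of $2$ they are $\bbQ[\sqrt{2}]$, $\bbQ[\rmi]$, $\bbQ[\sqrt{2}\,\rmi]$, and for $d=p^n$ with $p$ odd the unique quadratic subfield is $\bbQ[\sqrt{(-1)^{(p-1)/2}p}]$ by quadratic reciprocity. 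Checking when $\sqrt{d+1}$ lies in one of these recovers exactly $d\in\{3,8\}$ (and $d=2$ is handled separately, being trivial). Your proposal does not contain this Galois-theoretic step, and the ``perfect square'' shortcut does not replace it.
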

This lemma follows from  \lsref{lem:TripleProdRoot}, \ref{lem:SuperSIC}, and \ref{lem:quadraticEx} below.

\begin{lemma}\label{lem:TripleProdRoot}
Suppose $\{\Pi_j\}$ is a quasi-super-SIC with normalized triple products $\tilde{T}_{jkl}$. Then all $\tilde{T}_{jkl}$ are $2d^2$th roots of unity, that is, $\tilde{T}_{jkl}^{2d^2}=1$.
\end{lemma}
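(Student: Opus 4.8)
The plan is to combine an algebraic identity for the $d^2$-th powers of the normalized triple products with the $2$-transitivity coming from (quasi-)super-symmetry. First I would take a telescoping product over $l$ in the cocycle relation \eqref{eq:TripleProdP2}, with one index held fixed as an auxiliary, and simplify using \eqref{eq:TripleProdP1}; this yields, for every triple of distinct indices, $\tilde{T}_{jkl}^{d^2}=P_{jk}P_{kl}P_{lj}$, where the ``edge product'' $P_{ab}:=\prod_{l\neq a,b}\tilde{T}_{abl}$ satisfies $P_{ab}=\overline{P_{ba}}$ by \eqref{eq:TripleProdP1}. Next, since the extended symmetry group $\overline{EG}$ acts $2$-transitively on the $d^2$ projectors, any ordered pair of indices $(a,b)$ can be sent to any other ordered pair $(c,d)$; a unitary symmetry doing so carries the multiset $\{\tilde{T}_{abl}\}_{l}$ onto $\{\tilde{T}_{cdl}\}_{l}$, and an antiunitary one onto its complex conjugate. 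Hence every $P_{cd}$ equals a single value $P$ or its conjugate $\overline{P}$, and it suffices to prove $P=\pm1$: then $\tilde{T}_{jkl}^{d^2}\in\{\pm1\}$ and consequently $\tilde{T}_{jkl}^{2d^2}=1$.

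To prove $P=\pm1$ I would distinguish the parity of $|\overline{G}|$, the order of the (unitary) symmetry group. If $|\overline{G}|$ is even, Cauchy's theorem gives an order-$2$ element; being non-scalar it acts on the SIC projectors as a non-trivial involution $\sigma$ (its induced permutation squares to the identity since the unitary squares to a scalar), which therefore transposes some pair $\Pi_a,\Pi_b$. Conjugating $\tr(\Pi_a\Pi_b\Pi_{\sigma(l)})$ by the corresponding unitary shows $\tilde{T}_{ab\sigma(l)}=\overline{\tilde{T}_{abl}}$; since $\sigma$ permutes $\{l:l\neq a,b\}$, the multiset $\{\tilde{T}_{abl}\}_{l}$ is closed under complex conjugation, so $P_{ab}$ is real and, being of modulus $1$, equals $\pm1$. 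By the first paragraph, $P=\pm1$.

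The hard part will be ruling out $|\overline{G}|$ odd, which I would do with Burnside's theorem on $2$-transitive groups alone (avoiding the CFSG). A $2$-transitive group of degree $d^2\geq4$ has order divisible by $d^2(d^2-1)$, hence even; so if $|\overline{G}|$ is odd then $\overline{G}$ is an index-$2$ subgroup of $\overline{EG}$. The unique minimal normal subgroup $N$ of $\overline{EG}$ is nonabelian simple or elementary abelian; either way $|N|>2$, so $N\cap\overline{G}=N$ (otherwise $N$ would embed into $\overline{EG}/\overline{G}\cong\mathbb{Z}_{2}$), i.e.\ $N\leq\overline{G}$. A nonabelian simple $N$ would itself be $2$-transitive of degree $d^2\geq4$, hence of even order, contradicting $|\overline{G}|$ odd; therefore $N$ is elementary abelian, $N\cong\mathbb{F}_{p}^{2k}$ with $d=p^{k}$, and $\overline{EG}$ is an affine-type subgroup of $\AGL{2k}{p}$ whose point stabilizer is transitive on the $p^{2k}-1$ remaining points. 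Since $N\leq\overline{G}$ is (regular, hence) transitive, $\overline{G}=N\rtimes\overline{G}_{0}$ with $\overline{G}_{0}$ of odd order, and $p$ is odd because $|N|=p^{2k}$ divides $|\overline{G}|$. But $\overline{G}_{0}$ has index $\leq2$ in the transitive point stabilizer of $\overline{EG}$, so it has at most two orbits on the $p^{2k}-1$ points, of length $p^{2k}-1$ or $(p^{2k}-1)/2$; as $p$ is odd, $8\mid p^{2k}-1$, so both lengths are even and must divide $|\overline{G}_{0}|$ — contradicting that $|\overline{G}_{0}|$ is odd. Hence $|\overline{G}|$ is even, $P=\pm1$, and the lemma follows.
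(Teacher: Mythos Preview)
Your algebraic core is exactly the paper's: multiply the cocycle identity \eqref{eq:TripleProdP2} over the free index to get $\tilde T_{jkl}^{\,d^2}=P_{jk}P_{kl}P_{lj}$, and then reduce the lemma to showing each edge product $P_{ab}=\prod_{m}\tilde T_{mab}$ equals $\pm1$. Where you diverge is in \emph{how} you obtain $P_{ab}=\pm1$. The paper argues much more directly: in both the super-symmetric and the genuinely quasi case there is a \emph{unitary} element of $\overline{G}$ sending the ordered pair $(j,k)$ to $(k,j)$. In the super case this is immediate from $2$-transitivity of $\overline{G}$; in the quasi-but-not-super case the paper invokes \lref{lem:homoSuper} to conclude $\overline{G}$ is not $2$-homogeneous, and then a short orbit argument (the swap $(a,b)\mapsto(b,a)$ is $\overline{G}$-equivariant, so if the two $\overline{G}$-orbits on ordered pairs were exchanged by the swap, $\overline{G}$ would be $2$-homogeneous) shows $(j,k)$ and $(k,j)$ lie in the same $\overline{G}$-orbit. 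Either way the multiset $\{\tilde T_{mjk}\}_m$ equals $\{\tilde T_{mkj}\}_m=\{\overline{\tilde T_{mjk}}\}_m$, hence is conjugation-invariant, and $P_{jk}=\pm1$. Your route via an order-$2$ element of $\overline{G}$ reaches the same conjugation-invariance, but only for the particular pair swapped by that involution; you then need the $2$-transitivity of $\overline{EG}$ to propagate. This works, but is less economical.

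Your ``hard part'' is in fact vacuous, and this is worth noting because your handling of it contains a genuine gap. First the vacuity: since $4\mid d^2(d^2-1)$ for every $d\ge2$, and $|\overline{G}|$ is a multiple of $d^2(d^2-1)/2$ (it has index at most $2$ in the $2$-transitive group $\overline{EG}$), $|\overline{G}|$ is always even; so the odd-order case simply does not arise. Second the gap: your assertion that ``a nonabelian simple $N$ would itself be $2$-transitive'' is not available without CFSG---Burnside's theorem only gives that $N$ is \emph{primitive}. You could repair this step by invoking Feit--Thompson (nonabelian simple $\Rightarrow$ even order), but that is still a very heavy tool, and in light of the divisibility remark the whole case analysis is unnecessary.
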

\begin{proof}
If  $\{\Pi_j\}$ is  super-symmetric, then the multiset $\{\tilde{T}_{mjk}\}_{m=1}^{d^2}$ is identical with $\{\tilde{T}_{mkj}\}_{m=1}^{d^2}$. On the other hand, the two multisets are conjugates of each other. Therefore, both of them are conjugation invariant and $\prod_{m} \tilde{T}_{mjk}= \pm 1$, where the sign is independent of $j$ and $k$ as long as they are distinct. Now taking the product over $m$ in \eref{eq:TripleProdP2} yields $\tilde{T}_{jkl}^{d^2}= \pm 1$, which implies the lemma.

If  $\{\Pi_j\}$ is  quasi-super-symmetric but not super-symmetric, then $\{\Pi_j\}$ is  quasi-2-homogenous  but not 2-homogenous according to \lref{lem:homoSuper}. It follows that every ordered pair of distinct SIC projectors can be mapped to the same pair with the reverse order under the  symmetry group. So the multiset $\{\tilde{T}_{mjk}\}_{m=1}^{d^2}$ is still conjugation invariant and the lemma holds as before.
\end{proof}

\begin{lemma}\label{lem:SuperSIC}
Suppose there exists a quasi-super-SIC in dimension $d\geq3$, then $(d-2)\sqrt{d+1}\in \bbZ[\zeta_{2d^2}]$ and $\sqrt{d+1}\in \bbQ[\zeta_{2d^2}]$, where  $\zeta_{2d^2}$ is a primitive $2d^2$th root of unity.
\end{lemma}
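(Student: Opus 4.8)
The plan is to read off the desired algebraic statement directly from the \emph{column-sum} identity for triple products. Recall that, since $\sum_l \Pi_l = d$, for any two distinct indices $j,k$ one has $\sum_l T_{jkl} = d\,\tr(\Pi_j\Pi_k) = d/(d+1)$. The only additional ingredient needed is \lref{lem:TripleProdRoot}, which tells us that for a (quasi)-super-SIC every normalized triple product $\tilde{T}_{jkl}$ is a $2d^2$th root of unity and hence lies in $\bbZ[\zeta_{2d^2}]$.

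Carrying this out: fix distinct $j,k$ and split $\sum_l T_{jkl}$ into the contributions from $l=j$, from $l=k$, and from $l\notin\{j,k\}$. By cyclicity of the trace together with $\Pi_j^2=\Pi_j$, the first two terms each equal $\tr(\Pi_j\Pi_k)=1/(d+1)$, while for each of the remaining $d^2-2$ indices $l$ we have $|T_{jkl}|=(d+1)^{-3/2}$, so $T_{jkl}=(d+1)^{-3/2}\tilde{T}_{jkl}$. Substituting and solving for the sum of normalized triple products gives
\[
\sum_{l\notin\{j,k\}}\tilde{T}_{jkl}=(d+1)^{3/2}\Bigl(\frac{d}{d+1}-\frac{2}{d+1}\Bigr)=(d-2)\sqrt{d+1}.
\]
By \lref{lem:TripleProdRoot} the left-hand side is a finite sum of elements of $\bbZ[\zeta_{2d^2}]$, hence lies in $\bbZ[\zeta_{2d^2}]$; this is the first assertion. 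For the second, observe that $\bbQ[\zeta_{2d^2}]$ is a field containing $\bbZ[\zeta_{2d^2}]$, and since $d\geq 3$ the rational number $d-2$ is nonzero and therefore invertible in it, so $\sqrt{d+1}=\bigl((d-2)\sqrt{d+1}\bigr)/(d-2)\in\bbQ[\zeta_{2d^2}]$.

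I do not expect a genuine obstacle here: once \lref{lem:TripleProdRoot} is in hand, every step is elementary bookkeeping with triple products. The one point worth watching is the role of the hypothesis $d\geq 3$, which is exactly what allows division by $d-2$ to pass from the first claim to the second; when $d=2$ the displayed identity degenerates to $\sum_{l\notin\{j,k\}}\tilde{T}_{jkl}=0$ and carries no information about $\sqrt{d+1}$, consistent with the lemma being stated only for $d\geq 3$.
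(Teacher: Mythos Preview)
Your proof is correct and essentially identical to the paper's: both derive $\sum_{l\neq j,k}\tilde{T}_{jkl}=(d-2)\sqrt{d+1}$ from the column-sum identity $\sum_l T_{jkl}=d/(d+1)$ and then invoke \lref{lem:TripleProdRoot}. Your version is slightly more explicit about why the hypothesis $d\geq 3$ is needed for the passage to $\sqrt{d+1}\in\bbQ[\zeta_{2d^2}]$, but otherwise the arguments coincide.
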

\begin{remark}
Here $\bbZ[\zeta_{2d^2}]$ is the extension of the  ring of  integers by $\zeta_{2d^2}$, and $\bbQ[\zeta_{2d^2}]$ is the extension of the field of rational numbers  by $\zeta_{2d^2}$ \cite{Ash07book}.
\end{remark}
\begin{proof}
From the two equations
\begin{equation}
\begin{aligned}
\sum_{k} \tr(\Pi_j\Pi_k\Pi_l)&=\frac{2}{d+1}+\sum_{k\neq j,l}\frac{1}{(d+1)^{3/2}}\tilde{T}_{jkl},  \\
\sum_{k} \tr(\Pi_j\Pi_k\Pi_l)&=d\tr(\Pi_j\Pi_l)=\frac{d}{d+1}, \quad j\neq l,
\end{aligned}
\end{equation}
we deduce that
\begin{equation}
\sum_{k\neq j,l}\tilde{T}_{jkl}=(d-2)\sqrt{d+1}.
\end{equation}
The lemma follows from the fact that $\tilde{T}_{jkl}$ are $2d^2$th roots of unity according to \lref{lem:TripleProdRoot}.
\end{proof}

Now \lref{lem:SuperSICpp} is a consequence of \lref{lem:SuperSIC} and the following lemma.
\begin{lemma}\label{lem:quadraticEx}
Suppose $d$ is a prime power. Then $\sqrt{d+1}\in \bbQ[\zeta_{2d^2}]$ if and only if $d= 3, 8$.
\end{lemma}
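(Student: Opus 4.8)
The plan is to analyze when $\sqrt{d+1}$ lies in the cyclotomic field $\bbQ[\zeta_{2d^2}]$ via the classical theory of quadratic subfields of cyclotomic fields. Recall that $\bbQ[\zeta_m]$ contains $\sqrt{m^*}$, where $m^*=(-1)^{(m-1)/2}m$ for $m$ odd (and the relevant discriminant for $m$ even), and more precisely the quadratic subfields of $\bbQ[\zeta_m]$ are exactly $\bbQ[\sqrt{D}]$ for $D$ ranging over divisors of $m^*$ that are themselves discriminants of quadratic fields (squarefree $D\equiv 1\pmod 4$, or $4D'$ with $D'\equiv 2,3\pmod 4$). So $\sqrt{d+1}\in\bbQ[\zeta_{2d^2}]$ iff the squarefree part $s$ of $d+1$ satisfies $\bbQ[\sqrt{s}]\subseteq\bbQ[\zeta_{2d^2}]$, which holds iff the conductor of $\bbQ[\sqrt{s}]$ divides $2d^2$, i.e.\ every odd prime dividing $s$ divides $d$, and a mild $2$-adic condition holds.

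First I would reduce to the prime $p$ with $d=p^k$. Since $\operatorname{rad}(2d^2)=2p$ (or just $2$ when $p=2$), the only primes that can ramify in $\bbQ[\zeta_{2d^2}]$ are $2$ and $p$; hence any quadratic subfield is unramified outside $\{2,p\}$, so its discriminant is (up to sign/powers of $2$) a power of $p$ or a unit. Therefore $\sqrt{d+1}\in\bbQ[\zeta_{2d^2}]$ forces the squarefree part of $d+1$ to be of the form $\pm 1$, $\pm p$, $\pm 2$, or $\pm 2p$. Next I would run through these few cases. Writing $d+1 = p^k+1$: if $k$ is even then $d+1\equiv 2\pmod 4$ when $p$ odd, so I examine whether the squarefree part can be $\pm 2$; if $k$ is odd and $p$ odd, then $p+1\mid p^k+1$ gives constraints, and I would use lifting-the-exponent to control the $p$-adic and $2$-adic valuations of $p^k+1$ and show the squarefree part is generically divisible by a prime $\neq 2,p$ unless $d$ is small. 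The case $p=2$ (so $d=2^k$, $d+1$ odd) needs separate handling: there $\bbQ[\zeta_{2d^2}]=\bbQ[\zeta_{2^{2k+1}}]$ has a unique quadratic subfield $\bbQ[\sqrt 2]$, so $\sqrt{d+1}\in\bbQ[\zeta_{2d^2}]$ forces $d+1$ to be a square times $2$ or a square, i.e.\ $d+1$ or $2(d+1)$ a perfect square; $d+1=2^k+1$ being a square forces $k=3$, $d=8$, and $2(2^k+1)$ a square is impossible mod small moduli.

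Then I would verify the two claimed solutions: for $d=3$, $d+1=4$ so $\sqrt{d+1}=2\in\bbQ\subset\bbQ[\zeta_{18}]$, trivially; for $d=8$, $d+1=9$ so $\sqrt{d+1}=3\in\bbQ$, again trivially. (Indeed in both cases $d+1$ is already a perfect square, which is the ``easy'' direction.) The remaining work is the ``only if'' direction: showing that for every other prime power $d$, the squarefree part of $d+1$ has a prime factor outside $\{2,p\}$, or fails the $2$-adic admissibility condition. I expect the main obstacle to be organizing the elementary number theory cleanly — in particular handling $p=2$ with $k$ odd versus even, and ruling out the sporadic possibility that $p^k+1$ has squarefree part exactly $2$ or $2p$ for infinitely many $k$; this is where a valuation argument (lifting-the-exponent, $v_q(p^k+1)$ for an auxiliary prime $q\equiv 1\pmod{2k}$ dividing $p^k+1$) or a direct appeal to Zsygmondy-type primitive prime divisors would finish things, at the cost of checking the finitely many small exceptions by hand.
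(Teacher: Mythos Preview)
Your approach is essentially the paper's: both identify the quadratic subfields of $\bbQ[\zeta_{2d^2}]$ and check whether $\bbQ[\sqrt{d+1}]$ is among them. The paper does this by computing the Galois group $(\bbZ/2d^2\bbZ)^*$ and listing its index-$2$ subgroups; your conductor/ramification language is an equivalent packaging.

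Two corrections, however. First, your claim that $\bbQ[\zeta_{2^{2k+1}}]$ has a \emph{unique} quadratic subfield is false: for $2k+1\geq 3$ the Galois group is $\bbZ_2\times\bbZ_{2^{2k-1}}$, with three index-$2$ subgroups, giving the three quadratic subfields $\bbQ[\sqrt{2}]$, $\bbQ[\rmi]$, $\bbQ[\sqrt{-2}]$ (exactly as the paper records). This slip happens not to hurt you, since $d+1=2^k+1$ is odd and positive, so its squarefree part cannot be $-1$, $2$, or $-2$ anyway; hence $d+1$ must still be a perfect square, forcing $d=8$. Second, the Zsygmondy and lifting-the-exponent machinery you reach for is entirely unnecessary. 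For $p$ odd, $\bbQ[\zeta_{2p^{2k}}]=\bbQ[\zeta_{p^{2k}}]$ has cyclic Galois group and hence a \emph{unique} quadratic subfield $\bbQ[\sqrt{p^*}]$ with $p^*=(-1)^{(p-1)/2}p$; since $p\nmid p^k+1$, the squarefree part of $d+1$ can never equal $\pm p$, and the $2$-adic condition (conductor $4|D|$ cannot divide $2p^{2k}$) rules out $\pm 2,\pm 2p$ immediately. So the only possibility is that $d+1$ is a perfect square, giving $d=3$. The argument closes on the spot---there is no hard case analysis, and no primitive-prime-divisor input is needed.
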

\begin{proof}
Suppose $\sqrt{d+1}\in \bbQ[\zeta_{2d^2}]$. Then $\sqrt{d+1}$ belongs to $\bbQ$ or a quadratic extension of $\bbQ$. In the former case $d+1$ must be a perfect square, which implies that $d=3,8$ given that $d$ is a prime power (cf. \lref{lem:perfectS1} in the appendix).

Observing that  $\bbQ[\zeta_{2d^2}]$ is a Galois extension of $\bbQ$, we conclude that the  quadratic extensions  of $\bbQ$ contained in
$\bbQ[\zeta_{2d^2}]$ are in one-to-one correspondence with the subgroups of the Galois group $\Gal(\bbQ[\zeta_{2d^2}]/\bbQ)$ of index 2~\cite{Ash07book}. The group $\Gal(\bbQ[\zeta_{2d^2}]/\bbQ)$ is isomorphic to the automorphism group of $\bbZ_{2d^2}$, which in turn is isomorphic to the multiplicative group $\bbZ_{2d^2}^*$ of invertible elements (or units) in $\bbZ_{2d^2}$~\cite{KurzS04book}.

When $d$ is a power of 2,  $\bbZ_{2d^2}^*$ is isomorphic to $\bbZ_2\times \bbZ_{d^2/2}$, which contains three  subgroups of index 2. Consequently,  $\bbQ[\zeta_{2d^2}]$ contains three quadratic extensions over $\bbQ$, which are identical with the three quadratic extensions over $\bbQ$ contained in $\bbQ[\zeta_{8}]$. It is easy to verify that the three quadratic extensions are  $\bbQ[\sqrt{2}]$, $\bbQ[\rmi]$, and $\bbQ[\sqrt{2}\rmi]$, none of which can contain $\sqrt{d+1}$ except when $d=8$.

When $d=p^n$ is a power of an odd prime $p$, $\bbZ_{2d^2}^*$  is cyclic of order $p^{2n-1}(p-1)$ and contains a unique subgroup of index 2. Consequently,  $\bbQ[\zeta_{2d^2}]$ contains a unique  quadratic extension over $\bbQ$, which is identical with the  quadratic extension over $\bbQ$ contained in $\bbQ[\zeta_{p}]$. According to the famous quadratic reciprocity (see Proposition 7-3-1 in \rcite{Weis63book} for example), this unique quadratic extension is $\bbQ[\sqrt{(-1)^{(p-1)/2}p}]$, which cannot contain $\sqrt{d+1}$ except when $d=p=3$.
\end{proof}

\section{\label{sec:superSICpp}Super-SICs in prime power dimensions}
In this section we prove \thref{thm:Super-SIC} in the case of prime power dimensions without resorting to the CFSG.
\begin{theorem}\label{thm:Super-SICpp}
In prime power dimensions, the SIC in dimension 2, the Hesse SIC in dimension 3, and the set of Hoggar lines in dimension 8 are the only three (quasi)-super-SICs  up to unitary equivalence. They are also the only three (quasi)-2-homogeneous SICs up to unitary equivalence.
\end{theorem}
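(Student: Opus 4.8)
The plan is to combine the dimension restriction already in hand with the classical, CFSG-free structure theory of $2$-transitive permutation groups. By \lref{lem:homoSuper} it suffices to classify $(\mathrm{quasi})$-super-SICs, and by \lref{lem:SuperSICpp} such a SIC in a prime-power dimension can only occur for $d=2,3,8$. The case $d=2$ is immediate: every SIC there is a regular tetrahedron in the Bloch ball, so all are unitarily equivalent, with symmetry group $A_4$ acting $2$-transitively on the four projectors (and extended symmetry group $S_4$ acting $4$-transitively). The case $d=3$ is handled by invoking the classification of SICs in dimension $3$: all of them are covariant with respect to the Heisenberg--Weyl group \cite{HughS14}, the complete list is known \cite{Zhu10}, and a direct inspection shows that the Hesse SIC is the only one whose symmetry group acts $2$-transitively on the nine projectors. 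This leaves the substantive case $d=8$. Fix a putative $(\mathrm{quasi})$-super-SIC $\{\Pi_j\}$ in dimension $8$; by \lref{lem:quasi-super-SIC} it is in fact super-symmetric, so its symmetry group $G\le\mathrm{PU}(8)$ acts $2$-transitively on the $64$ projectors, and in particular the SIC is group covariant with $G$ as a covariance group.

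The core step is Burnside's theorem on $2$-transitive groups, which predates and does not use the CFSG: the socle $M$ of $G$ is either (i) an elementary abelian regular normal subgroup of order $64=2^6$, or (ii) a nonabelian simple group. In case (i), $M$ is a faithful irreducible projective unitary representation of $(\bbZ_2)^6$, so by \lref{lem:ProjEA} it is projectively unitarily equivalent to the multipartite HW group $\hw$ in dimension $8$ (the three-qubit Pauli group); hence the SIC is HW-covariant, and $G$, normalising $M$, lies in the Clifford group $\Cli$. Modulo phases, $G$ then contains $\phw$ as a normal subgroup with quotient a subgroup $G_0\le\Sp{6}{2}$ (the stabiliser of a fiducial projector), and $2$-transitivity forces $G_0$ to act transitively on the $63$ nonzero displacement labels in $\bbF_2^6$. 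A short analysis of the HW-covariant SICs in dimension $8$ — using this transitivity constraint on $G_0$ together with the constraints on the triple products $T_{jkl}$ — then identifies the fiducial, up to unitary equivalence, with the Hoggar fiducial, for which $G_0\cong\PSU{3}{3}$ has order $6048$.

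Case (ii), a nonabelian simple socle, is where the real work lies, and it is the step I expect to be the main obstacle. Here one must exclude, without the CFSG, the possibility that some nonabelian simple group acts $2$-transitively on $64$ points while admitting a faithful projective unitary representation of degree $8$. I would approach this through the classification of finite groups having a faithful projective complex representation of degree at most $8$, which in this bounded range is accessible via Blichfeldt/Jordan-type arguments rather than the full CFSG: one enumerates the simple groups that could occur and checks that none of them has a $2$-transitive action of degree $64$ (consistent with the fact that the only almost simple $2$-transitive groups of degree $64$ are $A_{64}$ and $S_{64}$, whose smallest faithful representations are far larger than $8$). Once case (ii) is eliminated, the three SICs named in the statement exhaust the $(\mathrm{quasi})$-super-SICs in prime-power dimensions; the assertion about $2$-homogeneous SICs then follows immediately from \lref{lem:homoSuper}.
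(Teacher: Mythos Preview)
Your overall architecture matches the paper's: reduce to $d\in\{2,3,8\}$ via \lref{lem:homoSuper} and \lref{lem:SuperSICpp}, dispose of $d=2$ by the Bloch sphere, use \lref{lem:quasi-super-SIC} in even dimension, and for $d=8$ apply Burnside's theorem to split into the affine and almost-simple cases. The differences and gaps are in the details.

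For $d=3$, the paper does \emph{not} appeal to the full classification of dimension-$3$ SICs (which form a continuous family, so ``direct inspection of the complete list'' is at best shorthand). Instead it argues intrinsically: the stabiliser of a fiducial sits inside $\SL{2}{3}$, and if the SIC is quasi-super-symmetric that stabiliser has order divisible by $4$; since $\SL{2}{3}$ has a unique involution, the stabiliser contains an element of order $4$. All order-$4$ Clifford unitaries in dimension $3$ are conjugate, so one may take the fiducial to be an eigenvector of the Fourier matrix, and a direct check of its three eigenkets singles out the Hesse fiducial.

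For $d=8$, case (ii), the paper avoids your Blichfeldt-type enumeration entirely. It simply cites the list of $2$-transitive groups of degree $64$ (Appendix~B of \cite{DixoM96book}) to observe that the only almost-simple possibilities are $A_{64}$ and $S_{64}$; both are $\geq 3$-transitive and are therefore excluded by \lref{lem:3covariant}. Your route would also work, but is heavier than necessary.

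The genuine gap is in case (i) for $d=8$. Your sentence ``a short analysis of the HW-covariant SICs in dimension $8$ \ldots\ identifies the fiducial'' is not a proof; this is precisely the uniqueness step that requires a concrete mechanism. The paper's key idea is the analogue of the $d=3$ order-$4$ argument: since $|\pc|=2^{15}\cdot 3^4\cdot 5\cdot 7$, the stabiliser (of order divisible by $63$) contains an element of order $7$; all Sylow $7$-subgroups of the Clifford group are cyclic of order $7$ and conjugate, so the fiducial is, up to equivalence, an eigenvector of one explicit order-$7$ Clifford unitary $U_7$. One then computes its eigenspaces: the six nondegenerate eigenkets are not fiducial, and the two fiducials in the degenerate eigenspace both generate SICs Clifford-equivalent to the Hoggar lines. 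Without this (or an equivalent) device, your proposal does not establish uniqueness in dimension $8$.
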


To achieve our goal, we need to introduce several basic  results concerning permutation groups \cite{DixoM96book,Came99book}; see \ref{sec:PermuGroup} for a short introduction on this subject.
\begin{theorem}[Burnside]\label{thm:Burnside}
Any  2-transitive permutation group $G$ on a finite set $\Omega$ has a unique minimal normal subgroup,
 which is either  elementary abelian acting regularly on or
 nonabelian simple  acting primitively on $\Omega$.
\end{theorem}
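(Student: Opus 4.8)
The plan is to reprove this classical theorem of Burnside using the structure theory of finite and permutation groups; all the background needed is standard and collected in \rscite{DixoM96book,Came99book}. I would start from two soft facts. First, a $2$-transitive group $G$ is primitive, since a nontrivial block could be moved off itself while fixing one of its points. Second, in a primitive group every nontrivial normal subgroup $N$ is transitive, because its orbits form a $G$-invariant partition, hence a trivial block system, and $N$ (a subgroup of the faithful group $G$) cannot fix every point. Let $N$ be a minimal normal subgroup; by the structure of minimal normal subgroups of finite groups, $N=T_1\times\cdots\times T_k$ with $T_1,\dots,T_k$ pairwise isomorphic simple groups, and $N$ is transitive.

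Next I would prove the key lemma that a $2$-transitive group has no nonabelian regular normal subgroup. If $R\trianglelefteq G$ is regular, identify $\Omega$ with $R$ so that $\alpha\leftrightarrow 1$; the action of $G_\alpha$ on $\Omega\setminus\{\alpha\}$ then becomes the conjugation action of $G_\alpha$ on $R\setminus\{1\}$, which is transitive by $2$-transitivity. Hence all nontrivial elements of $R$ are $G$-conjugate, so they have a common prime order $p$ and $R$ is a $p$-group; and since $Z(R)\setminus\{1\}$ is a nonempty union of $G_\alpha$-orbits lying inside the single orbit $R\setminus\{1\}$, we get $Z(R)=R$, so $R$ is elementary abelian.

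The remaining steps are then formal. For \emph{uniqueness}: if $N_1\neq N_2$ were minimal normal subgroups, then $N_1\cap N_2=1$, so $[N_1,N_2]=1$ and $N_2\le C_G(N_1)$; the centralizer of the transitive group $N_1$ is semiregular, so the transitive group $N_2$ is regular, and symmetrically $N_1$ is regular; by the lemma $N_1$ is then elementary abelian, hence self-centralizing in $\mathrm{Sym}(\Omega)$, which forces $N_2\le C_G(N_1)=N_1$ and $N_1=N_2$, a contradiction. Thus $N=\soc(G)$ is the unique minimal normal subgroup. If $T$ is \emph{abelian}, then $T\cong\bbZ_p$, $N$ is elementary abelian, and a faithful transitive abelian permutation group is regular, so $N$ acts regularly --- alternative (a). If $T$ is \emph{nonabelian}, then $Z(N)=1$, so $C_G(N)\cap N=1$; as $C_G(N)\trianglelefteq G$ cannot contain the unique minimal normal subgroup $N$, we get $C_G(N)=1$, whence $G$ embeds in $\Aut(N)=\Aut(T)\wr S_k$ and permutes the factors $T_1,\dots,T_k$ transitively (otherwise the product over an orbit of factors would be a proper nontrivial normal subgroup of $G$).

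The crux, and what I expect to be the main obstacle, is then to force $k=1$, i.e.\ to show that the nonabelian socle of a $2$-transitive group is \emph{simple}; this is precisely the point at which $2$-transitivity is used beyond primitivity. For $k\ge 2$ one goes through the possible primitive actions of a group with socle $T^k$ and shows none is $2$-transitive: in the product/compound type, $\Omega$ is a Cartesian power $\Delta^k$ with $N$ acting coordinatewise, and a pair of points agreeing in all but one coordinate cannot be carried to a pair differing in every coordinate --- a short counting argument; in the diagonal type, $N_\alpha$ is a full diagonal copy of $T$ and $|\Omega|=|T|^{k-1}$, and inspection of the $N_\alpha$-action on $\Omega\setminus\{\alpha\}$ shows $G_\alpha$ cannot be transitive there (nontrivial elements of the simple group $T$ fall into several orbits under automorphisms, e.g.\ because they have different orders). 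Once $k=1$, $N=T$ is simple, and being the transitive socle of a primitive group it is itself primitive --- alternative (b). In the write-up I would make everything through the third paragraph self-contained and, for the $k\ge2$ analysis, either carry out the O'Nan--Scott-type bookkeeping in full or cite \rscite{DixoM96book,Came99book}.
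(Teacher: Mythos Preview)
The paper does not prove \thref{thm:Burnside}; it is stated as a classical result of Burnside, followed only by a remark explaining the terminology, and then used as a black box (with references to \rscite{DixoM96book,Came99book}). So there is no ``paper's own proof'' to compare against.

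On its own merits, your sketch is sound and follows the standard textbook route: the lemma that a regular normal subgroup of a $2$-transitive group is elementary abelian is proved correctly, the uniqueness argument via centralizers is clean, and the plan to exclude $k\ge 2$ by running through the product and diagonal types is the right idea (this is essentially how \rscite{DixoM96book,Came99book} organize the nonabelian case). Two small comments. First, in the diagonal case your parenthetical ``nontrivial elements of $T$ fall into several orbits under automorphisms'' is the right obstruction for $k=2$, but for $k\ge 3$ the action is on $T^{k-1}$ and you should say explicitly how the argument goes (or reduce to $k=2$). Second, the throwaway justification ``being the transitive socle of a primitive group it is itself primitive'' is not valid as stated: socles of primitive groups need \emph{not} be primitive in general (novelty maximal subgroups give counterexamples in the almost simple case). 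You must use $2$-transitivity again here. One clean way: once $k=1$ and $N=T$ is simple, any $N$-block system is permuted by $G$ (since $N\trianglelefteq G$ sends $N$-blocks to $N$-blocks); show that $G$ in fact preserves a single nontrivial $N$-block system and invoke primitivity of $G$, or instead argue directly that $G_\alpha$ normalizes $N_\alpha$ and use $2$-transitivity to force $N_\alpha$ maximal in $N$. Either way this step deserves a line of argument rather than an assertion.
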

\begin{remark}
A \emph{minimal normal subgroup} $N$ of a  group $G$ is a normal subgroup other than the identity that contains no
other nontrivial normal subgroup of $G$.
A group action is \emph{regular} if it is transitive with trivial point stabilizer;
it is \emph{primitive} if it is transitive and preserves no nontrivial partition.
If $N$ is regular elementary abelian, then it can be identified as a vector space over some  finite Galois field, and $G$  as a subgroup of the group of affine semilinear transformations on the  vector space and is called   \emph{affine}. If $N$ is nonabelian simple, then $G$ can be identified as a subgroup of the automorphism group of  $N$ and is called \emph{almost simple} \cite{DixoM96book,Came99book}.
\end{remark}

Another useful tool in our study is the following lemma reproduced from Lemma 7.2 and Theorem 7.3 in the author's thesis \cite{Zhu12the} (see also  \rcite{Zhu15Sh}  and Theorem 2.34 in \rcite{Zaun11}).
\begin{lemma}\label{lem:orbits}
Suppose $\overline{G}$ is a subgroup of the symmetry group of a SIC. Then
the number of orbits  of
$\overline{G}$ on the SIC is equal to the sum of squared multiplicities of all the
inequivalent irreducible components of $\overline{G}$. In particular, $\overline{G}$ acts transitively on the SIC if and only if it is irreducible.
\end{lemma}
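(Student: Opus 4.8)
The plan is to identify the permutation representation of $\overline{G}$ on the SIC projectors with the conjugation representation of $\overline{G}$ on the operator space $M_d(\mathbb{C})$, and then to count the multiplicity of the trivial subrepresentation in two complementary ways.

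First I would note that the $d^2$ SIC projectors $\Pi_j$ are linearly independent and hence form a basis of $M_d(\mathbb{C})$: by \eqref{eq:SICinner} their Gram matrix is $\frac{1}{d+1}(dI+J)$, with $I$ the identity and $J$ the all-ones matrix, whose eigenvalues $d$ and $d/(d+1)$ are all positive, so it is invertible. Therefore, written in the basis $\{\Pi_j\}$, the action $U\mapsto(X\mapsto UXU^\dag)$ of $\overline{G}$ on $M_d(\mathbb{C})$ is precisely the permutation representation of $\overline{G}$ on the $d^2$ indices; note that this conjugation action is a genuine linear representation (not merely a projective one), since overall phases cancel. For any permutation representation the invariant vectors are spanned by the orbit sums, so the dimension of the invariant subspace --- equivalently the multiplicity of the trivial representation --- equals the number of orbits. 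Hence it suffices to compute the multiplicity of the trivial representation inside this conjugation representation.

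Next I would decompose the defining representation. Let $\rho$ be the (projective) unitary representation of $\overline{G}$ on $\mathbb{C}^d$ afforded by the chosen unitary representatives, and write $\rho\cong\bigoplus_i m_iV_i$ with the $V_i$ pairwise inequivalent irreducible components, all sharing the same factor system. Identifying $M_d(\mathbb{C})$ with $\mathbb{C}^d\otimes(\mathbb{C}^d)^\ast$, the conjugation representation is $\rho\otimes\rho^\ast\cong\bigoplus_{i,j}m_im_j\,(V_i\otimes V_j^\ast)$, where the factor systems cancel so that each $V_i\otimes V_j^\ast\cong\operatorname{Hom}(V_j,V_i)$ is an ordinary representation. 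By Schur's lemma its $\overline{G}$-invariant subspace is $\operatorname{Hom}_{\overline{G}}(V_j,V_i)$, of dimension $\delta_{ij}$. Summing, the multiplicity of the trivial representation in the conjugation representation is $\sum_i m_i^2$, which by the previous paragraph equals the number of orbits. The ``in particular'' claim then follows at once: $\overline{G}$ is transitive on the SIC iff it has a single orbit iff $\sum_i m_i^2=1$ iff exactly one $m_i$ equals $1$ and the rest vanish, i.e.\ iff $\rho$ is irreducible.

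The step I would handle most carefully --- and the only genuine subtlety --- is the projective nature of $\rho$: one must run Schur's lemma and the tensor-product decomposition keeping the cocycle in mind, or equivalently pass to a genuine unitary representation of a finite central extension of $\overline{G}$ by the relevant group of phases and observe that the central phases act trivially under conjugation, so that all orbit counts and multiplicities are unaffected. The remainder is routine bookkeeping with Schur's lemma.
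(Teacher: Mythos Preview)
Your argument is correct. The paper itself does not supply a proof of this lemma; it merely cites it from the author's thesis (Lemma~7.2 and Theorem~7.3 of \cite{Zhu12the}) and from Zauner \cite{Zaun11}, so there is no in-paper proof to compare against. That said, the route you take---identifying the permutation module on the SIC projectors with the conjugation module $\rho\otimes\rho^{*}$ via the fact that the $\Pi_j$ span $M_d(\mathbb{C})$, and then reading off the multiplicity of the trivial representation as $\sum_i m_i^2$ via Schur---is exactly the standard argument and almost certainly what appears in those references. Your handling of the projective issue (lifting to a central extension on which the phases act trivially under conjugation) is also the right way to make the bookkeeping airtight.
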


\begin{proof}[Proof of \thref{thm:Super-SICpp}]
In view of  \lsref{lem:homoSuper} and \ref{lem:SuperSICpp}, to  prove \thref{thm:Super-SICpp}, it remains to show the uniqueness of  quasi-super-SICs in dimensions 2,  3 and 8. Note that every quasi-super-SIC is necessarily group covariant.

In dimension 2, every SIC defines a regular tetrahedron on the Bloch sphere, so all SICs are unitarily equivalent.

In dimension~3,  every (group covariant) SIC  is covariant with respect to the HW group and its symmetry group is a subgroup of the  Clifford group, which is isomorphic  to $\mathrm{SL}(2, 3)\ltimes (\bbZ_3)^2$~\cite{Zhu10, Zhu12the, HughS14}. Therefore, the stabilizer of each SIC projector is isomorphic to a subgroup of $\mathrm{SL}(2, 3)$. If the SIC is quasi-super-symmetric, then the order of the stabilizer (within the symmetry group) is divisible by 4. Observing that $\mathrm{SL}(2, 3)$ contains a unique element of order~2, we conclude that the stabilizer contains an element of order 4. Since all order-4  Clifford unitary transformations in dimension 3 are conjugate to each other \cite{Zhu10}, without loss of generality we may assume that one fiducial ket is stabilized by (is an eigenket of) the order-4 Clifford unitary transformation
\begin{equation}
\frac{1}{\sqrt{3}}\begin{pmatrix}
1 & 1& 1\\
1 & \omega & \omega^2\\
1 &\ \omega^2 & \omega
\end{pmatrix},
\end{equation}
which happens to be the Fourier matrix in dimension 3.
Calculation shows that this  unitary transformation has three nondegenerate eigenkets, one of which happens to be the fiducial ket that generates the Hesse SIC (see \eref{eq:GodSIC}), while the other two are not fiducial kets. Therefore,  the Hesse  SIC is  the only quasi-super-SIC in dimension 3 up to unitary equivalence.

It remains to consider quasi-super-SICs in dimension 8. Suppose $\{\Pi_j\}$ is a quasi-super-SIC in dimension 8. According to \lref{lem:quasi-super-SIC}, $\{\Pi_j\}$  is super-symmetric, so its
symmetry group $\overline{G}$  is a 2-transitive (and automatically primitive) permutation group of degree 64.
  According to \thref{thm:Burnside}, $\overline{G}$ is either affine or almost simple and has a unique  minimal normal subgroup, say $\overline{N}$, which is either regular elementary abelian or nonabelian simple.
 According to Appendix B in \rcite{DixoM96book}, all 2-transitive permutation groups of degree 64 are of affine type except for the symmetric group and alternating group, which are 64 and 62 fold transitive, respectively.
In view of  \lref{lem:3covariant}, $\overline{G}$ can be isomorphic to neither the symmetric group nor the alternating  group of degree~64. Therefore, $\overline{G}$ is an affine 2-transitive permutation group, and $\overline{N}$ is regular elementary abelian. The group $\overline{N}$ is necessarily irreducible according to \lref{lem:orbits} and thus defines a faithful irreducible projective representation of  an elementary abelian group. According to \lref{lem:ProjEA}, $\overline{N}$ is unitarily equivalent to the multipartite HW group in dimension 8, that is, the three-qubit Pauli group. Consequently,  $\overline{G}$ is a subgroup of the Clifford group, which has order $2^{15}\cdot 3^4\cdot 5\cdot 7$ according to \eref{eq:CliffordOrder}.

Suppose $|\psi\rangle$ is a fiducial ket of the three-qubit Pauli group  that generates a quasi-super-SIC. Then its stabilizer  contains an order-7 Clifford unitary transformation. Observe that all Sylow 7-subgroups of the Clifford group are cyclic of order 7 and are conjugate to each other. Without loss of generality, we may assume that $|\psi\rangle$ is stabilized by the order-7 Clifford unitary transformation \cite{Zhu12the}
\begin{equation}\label{eq:HoggarStab}
U_7\rep\frac{\omega^5}{\sqrt{2}}\begin{pmatrix}
 0 & 0 & 1 & 0 & -\rmi & 0 & 0 & 0 \\
 0 & 0 & \rmi & 0 & -1 & 0 & 0 & 0 \\
 0 & 0 & 0 & -\rmi & 0 & -1 & 0 & 0 \\
 0 & 0 & 0 & -1 & 0 & -\rmi & 0 & 0 \\
 1 & 0 & 0 & 0 & 0 & 0 & -\rmi & 0 \\
 -\rmi & 0 & 0 & 0 & 0 & 0 & 1 & 0 \\
 0 & -\rmi & 0 & 0 & 0 & 0 & 0 & -1 \\
 0 & 1 & 0 & 0 & 0 & 0 & 0 & \rmi
\end{pmatrix},
\end{equation}
where $\omega=\rme^{2\pi \rmi/8}$. Calculation shows that $U_7$ has six nondegenerate eigenkets, none of which are fiducial kets.
The two-dimensional eigenspace corresponding to the eigenvalue 1 contains two fiducial kets, which
happen to be  the only two normalized kets in the eigenspace that satisfy the following three equations,
\begin{align}
\langle\psi| \sigma_z \otimes 1\otimes 1|\psi\rangle=\pm\frac{1}{3}, \; \langle\psi| 1\otimes \sigma_z \otimes  1|\psi\rangle=\pm\frac{1}{3}, \;
 \langle\psi| 1\otimes \sigma_x \otimes 1|\psi\rangle=\pm\frac{1}{3}.
\end{align}
The first fiducial ket happens to be the one that generates the set of Hoggar lines (see \eref{eq:HoggarLines}). The second one
\begin{equation}
\frac{1}{\sqrt{6}}(-\rmi, -1, 0, 0, -1 + \rmi, 0, 1, 1)^{\rmT}
\end{equation}
 generates a SIC which turns out to be equivalent to the set of Hoggar lines under the Clifford unitary transformation
\begin{equation}
\begin{pmatrix}0& U\\ V&0 \end{pmatrix},\quad  U=\diag(-\rmi,-\rmi,-1,1),\quad V=\diag(1,1,\rmi,-\rmi).
\end{equation}
This transformation was computed using the algorithm described in Chap.~10 of the author's thesis \cite{Zhu10}. Therefore, the set of Hoggar lines is the unique quasi-super-SIC in dimension 8 up to unitary equivalence.
\end{proof}

\section{\label{sec:superSIC-HW}Super-SICs and the Heisenberg-Weyl group}
In this section we prove our main result  \thref{thm:Super-SIC}. In view of \thref{thm:Super-SICpp}, it suffices to show that super-SICs can only exist in prime power dimensions. Here we not only prove this conclusion but also
 establish a remarkable  connection between super-SICs and the multipartite HW group, which is of independent interest.
\begin{theorem}[CFSG]\label{thm:SuperSIC-HW}
Every quasi-super-SIC is covariant with respect to the multipartite HW group in a prime power dimension; its (extended) symmetry group is a subgroup of the (extended) Clifford group.
\end{theorem}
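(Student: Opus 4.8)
The plan is to study the action of the extended symmetry group $\overline{EG}$ of a quasi-super-SIC on the $d^2$ SIC projectors via the structure theory of $2$-transitive permutation groups, and then to recognize the relevant minimal normal subgroup as the multipartite HW group.

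I would begin with the reduction to $2$-transitivity. If $\{\Pi_j\}$ is a quasi-super-SIC in dimension $d$, then by \lref{lem:homoSuper} its extended symmetry group $\overline{EG}$ acts $2$-transitively on the set $\Omega$ of $d^2$ SIC projectors (quasi-super-symmetry $\Leftrightarrow$ quasi-$2$-homogeneity $\Leftrightarrow$ quasi-$2$-covariance, since $2\le d^2/2$); in particular $\overline{EG}$ is primitive and $|\Omega|=d^2$ is a perfect square. By Burnside's theorem (\thref{thm:Burnside}), $\overline{EG}$ has a unique minimal normal subgroup $\overline{N}$, which is either regular elementary abelian (the affine case) or nonabelian simple acting primitively (the almost simple case). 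The strategy is to dispose of the almost simple case and to read off every assertion of the theorem from the affine case.

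For the affine case: $\overline{N}$ is elementary abelian of order $d^2$, so $d^2$ — and hence $d$ — is a power of a single prime $p$, say $d=p^a$; this already yields the prime-power conclusion. Since conjugation by unitary and antiunitary operators preserves unitarity, the phase classes of unitary operators inside $\overline{N}$ form a normal subgroup of $\overline{EG}$ contained in $\overline{N}$, which is nontrivial because $|\overline{N}|=d^2\ge 4$ and a product of two antiunitaries is unitary; by minimality of $\overline{N}$ it is all of $\overline{N}$. Thus $\overline{N}$ is a faithful projective unitary representation of the elementary abelian group $\overline{N}$ on $\mathbb{C}^d$, and it is irreducible because it acts transitively on $\{\Pi_j\}$ (\lref{lem:orbits}). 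By \lref{lem:ProjEA}, $\overline{N}$ is projectively unitarily equivalent to the multipartite HW group $\hw$ in dimension $d=p^a$, so the SIC (a single $\overline{N}$-orbit) is $\hw$-covariant up to unitary equivalence; and since $\overline{N}$ is normal in $\overline{EG}$, the group $\overline{EG}$ normalizes this copy of $\hw$ and therefore lies in the extended Clifford group, with its unitary part $\overline{G}$ lying in the Clifford group. This proves the theorem whenever $\overline{EG}$ is affine.

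The hard part — and the sole place where the CFSG is needed, via the classification of $2$-transitive almost simple groups, which I would package as \lref{lem:2transitiveAS} — is ruling out the almost simple case. Suppose $\overline{EG}$ is almost simple with socle $\overline{N}=T$ acting $2$-transitively of perfect-square degree $d^2$. Going through the classification \rcite{DixoM96book}, the actions that are at least $3$-transitive — essentially $A_{d^2}$ and $S_{d^2}$ for $d\ge3$, and the $3$-transitive groups of degree $q+1=d^2$ with socle $\mathrm{PSL}(2,q)$ (for instance $\mathrm{PSL}(2,8)$ in degree $9$) — would make the SIC (quasi)-triply covariant, contradicting \lref{lem:3covariant} and \lref{lem:3covariantQuasi}. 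For the groups that are exactly $2$-transitive (the natural actions of $\mathrm{PSL}(d',q)$ with $d'\ge3$, of $\mathrm{PSU}(3,q)$, of the Suzuki and Ree groups, of $\mathrm{Sp}(2m,2)$, together with the remaining finitely many sporadic actions), I would first decide which of the corresponding degrees can be perfect squares — for example $q^2+1$ and $1+q+q^2$ never are, which removes several families at a stroke — and then eliminate each surviving possibility using \lref{lem:SuperSIC}: combining $\sqrt{d+1}\in\mathbb{Q}[\zeta_{2d^2}]$ with $\gcd(d+1,d)=1$ and elementary facts about conductors of quadratic fields forces $d+1$ to be a perfect square, whereas the few remaining candidates, such as $\mathrm{Sp}(6,2)$ in degree $36$ ($d=6$, $d+1=7$), $\mathrm{PSL}(5,3)$ in degree $121$ ($d=11$, $d+1=12$), and $\mathrm{PSL}(4,7)$ in degree $400$ ($d=20$, $d+1=21$), all fail it. I expect the bookkeeping of the perfect-square and arithmetic conditions across all families of the classification to be the main obstacle; the affine case, by contrast, is a clean application of \lref{lem:ProjEA} and \lref{lem:orbits}. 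Once the almost simple case is shown to be vacuous, $\overline{EG}$ must be affine and the theorem follows from the affine-case analysis above.
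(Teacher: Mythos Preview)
Your argument is correct and shares the paper's skeleton: Burnside splits $\overline{EG}$ into affine and almost simple; in the affine case \lref{lem:orbits} and \lref{lem:ProjEA} identify $\overline{N}$ with the multipartite HW group; the almost simple candidates from \lref{lem:2transitiveAS} must then be eliminated. The elimination step, however, is genuinely different. The paper (inside \lref{lem:MNS2}) disposes of each surviving socle---$\PSL{2}{8}$, $\PSL{4}{7}$, $\PSL{5}{3}$, $\Sp{6}{2}$, and $A_n$---representation-theoretically: since $\overline{N}$ acts transitively on the SIC it would have to be irreducible in dimension $d=\sqrt{n}$, but the known lower bounds on the minimal faithful projective degree (Table~II of \rcite{TiepZ96}) exceed $d$ in every case. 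You instead recycle the triple-product lemmas: the $3$-transitive socles ($A_n$ and $\PSL{2}{8}$) fall to \lref{lem:3covariant} and \lref{lem:3covariantQuasi}, and the remaining three ($d=6,11,20$) fall to \lref{lem:SuperSIC} together with the conductor observation that $\sqrt{d+1}\in\bbQ[\zeta_{2d^2}]$ with $\gcd(d+1,d)=1$ forces $d+1$ to be a perfect square---in effect extending \lref{lem:quadraticEx} beyond prime powers. Your route avoids the external degree tables and extracts more from the paper's own ingredients; the paper's route is uniform and needs none of the triple-product analysis at this stage. A smaller second difference: to see that $\overline{N}$ consists of unitary classes you argue directly that the unitary part of $\overline{N}$ is a nontrivial normal subgroup of $\overline{EG}$, whereas the paper goes through \lref{lem:index2Normalsub} to identify the socles of $\overline{G}$ and $\overline{EG}$.
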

\begin{remark}
Surprisingly, the symmetry requirement on a SIC naturally leads to the canonical commutation relation and the multipartite HW group.
When the dimension is a prime, this conclusion is consistent with the earlier conclusion of  the author \cite{Zhu10} that every group covariant SIC is covariant with respect to the HW group and that its (extended) symmetry group is a subgroup of the (extended) Clifford group.
 Recently, we have generalized \thref{thm:SuperSIC-HW} to operator bases, which plays a crucial role in understanding the discrete Wigner function \cite{Zhu15P}. In addition, \thref{thm:SuperSIC-HW} is useful to studying unitary 2-designs \cite{Zhu15U}.
\end{remark}

To prove \thref{thm:SuperSIC-HW}, we need two additional technical lemmas, whose  proofs are relegated to the appendix.
 \begin{lemma}\label{lem:index2Normalsub}
Suppose $H$ is a subgroup of index 2 of a 2-transitive permutation group $G$ with $|G|>2$ on $\Omega$. Then $H$ has a unique minimal normal subgroup, which coincides with the unique minimal normal subgroup of $G$.
\end{lemma}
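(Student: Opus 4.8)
The plan is to locate the minimal normal subgroup of $G$ via \thref{thm:Burnside}, show that it already lies inside $H$, and then verify that it survives as the unique minimal normal subgroup of $H$.

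First I would apply \thref{thm:Burnside} to obtain the unique minimal normal subgroup $N$ of $G$: it is either (i) elementary abelian acting regularly on $\Omega$, or (ii) nonabelian simple acting primitively. Since $|G|>2$ we have $|\Omega|\geq 3$ (otherwise $G\leq S_2$). Now $N\cap H$ is normal in $G$ and contained in $N$, hence equals $1$ or $N$; if it were $1$, then $N\cong N/(N\cap H)$ would embed into $G/H\cong\bbZ_2$, which is impossible in case (i) because $|N|=|\Omega|\geq 3$ and in case (ii) because $N$ is nonabelian. So $N\leq H$, and in particular $H$ contains the transitive group $N$ and is therefore transitive; also $C_H(N)=C_G(N)\cap H$. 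I would next record the standard fact that $C_G(N)\leq N$: if $C_G(N)\neq 1$ it contains a minimal normal subgroup of $G$, namely $N$, forcing $N$ abelian, so in case (ii) $C_G(N)=1$, while in case (i) $C_G(N)=N$ since the centraliser in $\mathrm{Sym}(\Omega)$ of an abelian regular subgroup is that subgroup itself. Hence $C_H(N)\leq N$ in both cases.

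The heart of the argument, and the step I expect to be the main obstacle, is to show that $N$ is a minimal normal subgroup of $H$. In case (ii) this is immediate: since $N$ is simple, any subgroup of $N$ normal in $H$ is normal in $N$, hence $1$ or $N$. Case (i) requires checking that the index-$2$ subgroup $H_\alpha:=H\cap G_\alpha$ still acts irreducibly on $N$. Writing $H=N\rtimes H_\alpha$ and using $N\leq H$ one gets $[G_\alpha:H_\alpha]=[G:H]=2$; identifying $N$ with an $\bbF_p$-vector space $V$ on which $G_\alpha$ acts by conjugation, $2$-transitivity of $G$ means $G_\alpha$ is transitive on $V\setminus\{0\}$, and since $H_\alpha$ is normal of index $2$ in $G_\alpha$ it has at most two orbits there (which $G_\alpha$ permutes transitively), of equal size when there are two. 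A proper nonzero $H_\alpha$-invariant subspace $W$ would have $W\setminus\{0\}$ a union of these orbits that is neither empty nor all of $V\setminus\{0\}$, hence exactly one orbit, so $|W|=(|V|+1)/2=(|N|+1)/2$; but $(|N|+1)/2$ is never a prime power (it is not even an integer when $p=2$, and for odd $p$ it is coprime to $p$ and larger than $1$), a contradiction. Therefore $V$ is an irreducible $\bbF_pH_\alpha$-module, so the only $H_\alpha$-invariant subgroups of $N$, hence the only normal subgroups of $H$ inside $N$, are $1$ and $N$; thus $N$ is minimal normal in $H$.

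Finally, uniqueness is routine: if $M$ is any minimal normal subgroup of $H$, then $M\cap N$ is normal in $H$ and contained in $N$, so it is $1$ or $N$; if it is $N$ then $N\leq M$ and minimality of $M$ forces $M=N$, while if it is $1$ then $[M,N]\leq M\cap N=1$, so $M\leq C_H(N)\leq N$, contradicting $M\cap N=1$ and $M\neq 1$. Hence $N$ is the unique minimal normal subgroup of $H$ and coincides with that of $G$. Essentially all the difficulty is concentrated in the affine case of the minimality step—specifically the orbit-counting obstruction $(|N|+1)/2$—with everything else being standard manipulation built on \thref{thm:Burnside}.
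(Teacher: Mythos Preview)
Your argument is correct, but it follows a genuinely different path from the paper's. Both proofs begin with the same observation that $H_\alpha$ has at most two orbits of equal length on $\Omega\setminus\{\alpha\}$, but they exploit it differently. The paper uses this orbit structure to prove that $H$ is \emph{primitive}, then invokes the structure theorem for primitive groups (\thref{thm:primitive}) to conclude that $H$ has at most two minimal normal subgroups; it then picks a minimal normal subgroup $M$ of $H$ inside $N$ and argues $M=N$ because $M$ is transitive (from primitivity) inside the regular $N$, or normal in the simple $N$. You instead bypass primitivity and \thref{thm:primitive} entirely: in the affine case you translate the orbit structure of $H_\alpha$ into irreducibility of $N$ as an $\bbF_p H_\alpha$-module via the neat numerical obstruction that $(|N|+1)/2$ is never a power of $p$, and then handle uniqueness by the centralizer argument $C_H(N)\leq N$. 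Your route is more self-contained and elementary, trading the appeal to \thref{thm:primitive} for a short arithmetic check; the paper's route is more structural but leans on a heavier background result. As a minor bonus, you make explicit why $N\leq H$, a point the paper asserts without justification.
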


\begin{lemma}[CFSG]\label{lem:2transitiveAS}
Suppose $G$ is an almost simple 2-transitive permutation group whose degree $n$ is a perfect square. Let $N$ be the minimal normal subgroup of $G$. Then one of the following three cases holds.
\begin{enumerate}
\item $N$ is isomorphic to the alternating group $A_n$ with $n\geq5$ a perfect square.
\item  $N$ is isomorphic to $\mathrm{PSL}(k,q)$ with $(k,q)=(2,8)$, $(4,7)$, or $(5,3)$, and $n$ is equal to $3^2$, $20^2$, or $11^2$ accordingly.
 \item  $N$ is isomorphic to $\mathrm{Sp}(6,2)$, and $n$ is equal to $6^2$.
\end{enumerate}
\end{lemma}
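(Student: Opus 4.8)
The plan is to reduce the statement to the classification of finite $2$-transitive permutation groups and then, for each almost simple group on the resulting list, to decide when its degree can be a perfect square; this turns into a handful of elementary Diophantine problems together with one genuinely deep one.

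First I would invoke \thref{thm:Burnside} and the classification of $2$-transitive almost simple groups \cite{Hupp57, Heri74, Heri85, DixoM96book, Came99book}: for an almost simple $2$-transitive group $G$ of degree $n$ with minimal normal subgroup $N$, the pair $(N,n)$ is one of the following. Either $N=A_n$ in its natural action with $n\ge 5$; or $N=\PSL{k}{q}$ with $k\ge 2$ and $(k,q)\ne(2,2),(2,3)$, acting on the points or hyperplanes of $\mathrm{PG}(k-1,q)$, so that $n=(q^{k}-1)/(q-1)$; or $N=\PSU{3}{q}$ with $q\ge 3$ and $n=q^{3}+1$; or $N$ is a Suzuki group ${}^2B_2(q)$ with $n=q^{2}+1$; or $N$ is a Ree group ${}^2G_2(q)$ with $q\ge 27$ and $n=q^{3}+1$; or $N=\Sp{2m}{2}$ with $m\ge 3$ and $n=2^{2m-1}\pm 2^{m-1}$; or $(N,n)$ is one of the finitely many sporadic $2$-transitive actions, namely those of $\PSL{2}{11}$, $A_7$, $\PSL{2}{8}$, the Mathieu groups $M_{11},M_{12},M_{22},M_{23},M_{24}$, $\mathrm{HS}$, and $\mathrm{Co}_3$, whose degrees are $11,12,15,22,23,24,28,176$, and $276$. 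It then remains only to keep the entries for which $n$ is a perfect square.

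Next I would go through the families. For $N=A_n$ the condition is just that $n$ be a perfect square, which is case~1 (the value $n=4$ does not occur here, since $A_4$ and $S_4$ are not almost simple). For $N=\PSL{2}{q}$ one needs $q+1=m^{2}$, i.e.\ $q=(m-1)(m+1)$ with $q$ a prime power; as the two factors differ by $2$, a short case check on the prime dividing $q$ forces $q=8$ (the remaining candidate $q=3$ being excluded from the list), giving $\PSL{2}{8}$ on $9$ points. For $N=\PSL{3}{q}$ the degree $q^{2}+q+1$ lies strictly between $q^{2}$ and $(q+1)^{2}$, so it is never a square. For $k\ge 4$ the equation $(q^{k}-1)/(q-1)=m^{2}$ is governed by the classical theorem of Ljunggren on $(x^{k}-1)/(x-1)=y^{2}$, whose only integer solutions with $k>2$ and $x,y>1$ are $(x,k,y)=(7,4,20)$ and $(3,5,11)$; both bases are prime powers, yielding $\PSL{4}{7}$ on $400$ points and $\PSL{5}{3}$ on $121$ points. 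Hence the linear family contributes exactly case~2. For $N=\PSU{3}{q}$ or $N={}^2G_2(q)$ the requirement $q^{3}+1=m^{2}$ is the Mordell equation $m^{2}=x^{3}+1$, whose only solution with base $\ge 2$ is $q=2$, which is excluded ($\PSU{3}{2}$ is not simple, and Ree groups have $q\ge 27$); and $q^{2}+1=m^{2}$ has no solution, so the Suzuki groups drop out as well. For $N=\Sp{2m}{2}$ the degree $2^{m-1}(2^{m}\pm 1)$ is a product of coprime factors, hence a square only if each factor is: $2^{m-1}$ is a square iff $m$ is odd, $2^{m}+1$ is a square only for $m=3$, and $2^{m}-1$ is never a square for $m\ge 2$; the unique survivor is $\Sp{6}{2}$ on $2^{5}+2^{2}=36$ points, which is case~3. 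Finally, none of the sporadic degrees $11,12,15,22,23,24,28,176,276$ is a perfect square, so the sporadic cases contribute nothing. Collecting the survivors gives exactly the three cases in the statement.

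The main obstacle is the sub-case $k\ge 4$ of the linear family: ruling out all pairs except $(q,k)=(7,4)$ and $(3,5)$ genuinely requires Ljunggren's Diophantine theorem — already for $k=4$ it amounts to solving the quartic $2s^{4}-2s^{2}+1=t^{2}$, obtained by forcing $(q+1)/2$ and $(q^{2}+1)/2$ to be perfect squares, which is not elementary. Every other step is either an elementary factorization/congruence argument or a finite verification against the classification tables, and the sole remaining nontrivial external input is the classification of $2$-transitive groups itself, which is where the CFSG enters.
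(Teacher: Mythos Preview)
Your proof is correct and follows essentially the same route as the paper: invoke the classification of almost simple $2$-transitive groups, then sift the resulting list by deciding which degrees can be perfect squares, with Ljunggren's theorem as the one deep Diophantine input for the $\PSL{k}{q}$ family. The differences are purely cosmetic---the paper packages the number-theoretic checks into auxiliary lemmas and disposes of $q^3+1$ via the prime-power factorization $(b-1)(b+1)=q^3$ rather than the Mordell equation, and it does not split off $k=3$ separately---but the overall structure and the nontrivial external inputs are identical.
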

\begin{remark}Here the symbols $n,q$ are independent of  those appearing in \sref{sec:HWCli}.
The proof of this lemma relies on the classification of almost simple 2-transitive permutation groups \cite{DixoM96book,Came99book}, which in turn relies on the CFSG \cite{ConwCNP85,Wils09book}. All other results in this paper that rely on the CFSG rely on this lemma either directly or indirectly.
\end{remark}

\begin{lemma}[CFSG] \label{lem:MNS2}
The symmetry group and extended symmetry group of any quasi-super-SIC have a unique and identical   minimal normal subgroup, which is elementary abelian and regular.
\end{lemma}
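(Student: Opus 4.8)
The plan is to analyze the extended symmetry group of the quasi-super-SIC, which I denote $\overline{G}_{\mathrm{e}}$; by definition it acts $2$-transitively on the set $\Omega$ of the $d^2$ SIC projectors, and since $|\Omega|=d^2\ge 4$ we have $|\overline{G}_{\mathrm{e}}|\ge 12>2$. Burnside's theorem (\thref{thm:Burnside}) gives $\overline{G}_{\mathrm{e}}$ a unique minimal normal subgroup $\overline{N}$, which is either elementary abelian acting regularly, or nonabelian simple acting primitively. The whole lemma reduces to excluding the second possibility: if $\overline{N}$ is elementary abelian and regular, then it is the unique minimal normal subgroup of $\overline{G}_{\mathrm{e}}$ by Burnside, and the statement for the symmetry group $\overline{G}$ follows either trivially (when $\overline{G}=\overline{G}_{\mathrm{e}}$) or from \lref{lem:index2Normalsub} (when $[\overline{G}_{\mathrm{e}}:\overline{G}]=2$, which is legitimate since $|\overline{G}_{\mathrm{e}}|>2$), giving $\overline{G}$ a unique minimal normal subgroup coinciding with $\overline{N}$.

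So I would assume, toward a contradiction, that $\overline{N}$ is nonabelian simple. Then $\overline{G}_{\mathrm{e}}$ is an almost simple $2$-transitive permutation group whose degree $d^2$ is a perfect square, so \lref{lem:2transitiveAS} leaves exactly three cases: (i) $\overline{N}\cong A_{d^2}$ with $d^2\ge 5$, hence $d\ge 3$; (ii) $\overline{N}\cong\PSL{k}{q}$ with $(k,q)=(2,8),(4,7),(5,3)$, forcing $d=3,20,11$ respectively; (iii) $\overline{N}\cong\Sp{6}{2}$, forcing $d=6$. I would then dispose of each in turn.

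Case (i): $\overline{G}_{\mathrm{e}}$ contains $A_{d^2}$ in its natural action on $d^2\ge 9$ points, so it is at least $3$-transitive, making the SIC quasi-triply covariant in dimension $d\ge 3$ — contradicting \lref{lem:3covariantQuasi}. Case (ii) with $d=11$: since $11$ is a prime power, \lref{lem:SuperSICpp} already forbids a quasi-super-SIC. Case (ii) with $d=3$: by \thref{thm:Super-SICpp} the only quasi-super-SIC in the prime power dimension $3$ is the Hesse SIC, whose extended symmetry group has order $432<504=|\PSL{2}{8}|$, so $\overline{N}\cong\PSL{2}{8}$ is impossible (equivalently, the SIC is group covariant, so its extended symmetry group lies in the solvable extended Clifford group in dimension $3$, which contains no copy of $\PSL{2}{8}$). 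The remaining dimensions $d=20$ and $d=6$ are not prime powers, so I would invoke the arithmetic constraint of \lref{lem:SuperSIC}: a quasi-super-SIC in dimension $d\ge3$ forces $\sqrt{d+1}\in\bbQ[\zeta_{2d^2}]$. For $d=20$ this would require $\sqrt{21}\in\bbQ[\zeta_{800}]$, which fails because $\bbQ[\sqrt{21}]$ has conductor $21$ and $21\nmid 800=2^{5}\cdot5^{2}$; for $d=6$ it would require $\sqrt{7}\in\bbQ[\zeta_{72}]$, which fails because $\bbQ[\sqrt{7}]$ has conductor $28$ and $28\nmid 72=2^{3}\cdot3^{2}$ — both being instances of the quadratic-subfield bookkeeping already carried out in the proof of \lref{lem:quadraticEx}. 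With every case excluded, $\overline{N}$ must be elementary abelian and regular, which completes the argument.

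The main obstacle is clearly the almost simple alternative. Burnside by itself leaves open that the (extended) symmetry group of a quasi-super-SIC could be a large $2$-transitive group with nonabelian simple socle, and the real content is that, because the degree is a perfect square, \lref{lem:2transitiveAS} cuts the list of such candidates down to a handful. Of these, only dimensions $6$ and $20$ fall outside the prime-power classification, and it is a fortunate coincidence that the number-theoretic obstruction of \lref{lem:SuperSIC} — $\sqrt{d+1}$ must lie in a cyclotomic field whose quadratic subfields are completely determined — is just strong enough to eliminate exactly those two.
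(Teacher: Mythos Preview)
Your proof is correct, but it takes a genuinely different route from the paper's own argument at the decisive step of excluding the almost-simple alternative.

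The paper, after reducing to the same short list from \lref{lem:2transitiveAS}, invokes a \emph{representation-theoretic} obstruction: since the minimal normal subgroup $\overline{N}$ acts transitively on the SIC, it is irreducible by \lref{lem:orbits}, hence yields a faithful irreducible projective representation of the candidate simple group in dimension $d$. The paper then quotes the Tiep--Zalesskii tables of minimal projective degrees to see that $\PSL{2}{8}$, $\PSL{4}{7}$, $\PSL{5}{3}$, $\Sp{6}{2}$ cannot have such representations in dimensions $3$, $20$, $11$, $6$ (their minimal degrees being $7$, $399$, $120$, $7$); the alternating case is handled by \lref{lem:3covariant} (or by the $n-1$ lower bound for $A_n$). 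This is a single uniform mechanism.

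You instead recycle the paper's earlier \emph{SIC-specific} machinery: the alternating case via \lref{lem:3covariantQuasi}, the prime-power dimensions $3$ and $11$ via \lref{lem:SuperSICpp}/\thref{thm:Super-SICpp}, and the non-prime-power dimensions $6$ and $20$ via the cyclotomic obstruction of \lref{lem:SuperSIC} together with the conductor computation (your values $21$ and $28$ are correct, so the inclusions indeed fail). This has the virtue of being entirely self-contained within the paper --- no appeal to external minimal-degree tables --- at the cost of a case-by-case argument and of leaning on \thref{thm:Super-SICpp}, which is logically independent of the present lemma but is a substantially heavier result than what the paper actually uses here.
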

\begin{proof}
Let $\{\Pi_j\}$  be a quasi-super-SIC with symmetry group $\overline{G}$ and extended symmetry group $\overline{EG}$. Then  $\overline{G}$ is either identical with $\overline{EG}$ or is a subgroup of index 2. So  $\overline{G}$ and  $\overline{EG}$  have a unique and common minimal normal subgroup according to \thref{thm:Burnside} and \lref{lem:index2Normalsub}. In addition, the minimal normal subgroup  $\overline{N}$ acts transitively on the set of SIC projectors and is thus irreducible according to \lref{lem:orbits}.

Suppose on the contrary that $\overline{N}$ is not elementary abelian or regular. Then
$\overline{N}$ is a nonabelian simple group, and  $\overline{EG}$ is an almost simple 2-transitive permutation group of degree $d^2$, which is a perfect square.  According to \lref{lem:3covariant}, $\overline{N}$ cannot be isomorphic to  $A_n$ for $n\geq 5$, because $A_n$ is $(n-2)$-transitive \cite{DixoM96book, Came99book}.
In view of  \lref{lem:2transitiveAS},
$\overline{N}$ is a faithful irreducible projective representation (over the complex numbers) of $\mathrm{PSL}(k,q)$ with $(k,q)=(2,8)$, $(4,7)$,  $(5,3)$, or $\mathrm{Sp}(6,2)$, and the degree of the representation (which equals the square root of the degree of the permutation group)  is 3, 20, 11, or 6, respectively.  According to Table II in \rcite{TiepZ96}, however, the minimal degree of such a representation is   7, 399, 120, or 7.  The same reasoning can also be used to exclude the alternating group $A_n$ (without relying on \lref{lem:3covariant}) given  that the minimal degree of such a representation for $A_n$ is $n-1$ when $n\geq 8$ \cite{BessNOT15}.
This contradiction completes the proof.
\end{proof}

\begin{proof}[Proof of \thref{thm:SuperSIC-HW}]
Suppose $\{\Pi_j\}$ is a quasi-super-SIC in dimension $d$ with symmetry group $\overline{G}$ and extended symmetry group $\overline{EG}$. Then $\overline{G}$ and $\overline{EG}$ have a unique common minimal normal subgroup, say $\overline{N}$. In addition,
$\overline{N}$ is  elementary abelian and acts regularly on the set of SIC projectors according to \lref{lem:MNS2}. Therefore, $\overline{N}$ has order $d^2$ and is irreducible  according to \lref{lem:orbits}. In a word, $\overline{N}$ is a faithful irreducible projective representation of an elementary abelian group, so it is (projectively) unitarily  equivalent to  the multipartite  HW group according to \lref{lem:ProjEA}. Given that $\overline{N}$ is normal in $\overline{G}$ and $\overline{EG}$, we conclude that  $\overline{G}$ ($\overline{EG}$) is a subgroup of the  Clifford group (extended Clifford group).
\end{proof}

Finally, we can determine all super-SICs, thereby achieving our main goal.
\begin{proof}[Proof of  \thref{thm:Super-SIC}]
\Thref{thm:Super-SIC} is an immediate consequence of \thsref{thm:Super-SICpp} and \ref{thm:SuperSIC-HW}.
\end{proof}

\section{\label{sec:summary}Summary}
We have introduced super-SICs as the most symmetric structure that can appear in the Hilbert space. We proved that the SIC in dimension~2, the Hesse SIC in dimension~3, and the set of Hoggar lines in dimension~8 are the only three super-SICs up to unitary equivalence.  Such general statements are of intrinsic interest but are quite rare in the literature due to  the enormous difficulty in decoding elusive SICs.
Our work provides valuable insight on   symmetry of SICs and  geometry of the quantum state space, which  may have implications for  foundational studies, such as  quantum Bayesianism.
Our work also reveals an intriguing connection between symmetry and the canonical commutation relation, which deserves further study. In addition, the ideas and techniques introduced here  are quite helpful  to
studying  MUB, discrete Wigner functions, and unitary 2-designs etc.
Furthermore, our work establishes diverse links between SICs  and various other subjects, such as number theory, representation theory, combinatorics, and theory of permutation groups, which are of interest to researchers from respective fields.

\section*{Acknowledgements}
The author is grateful to Dragomir \v{Z} {\DJ}okovi\'{c} for simplifying the proof of \lref{lem:index2Normalsub}, to Gergely Harcos for helping proving \lref{lem:NageL} in the appendix, and to Daniel El-Baz for recommending \rcite{Ribe94}.
The author also  thanks Marcus Appleby, Ingemar Bengtsson, Lin Chen,  Markus Grassl, and Mark Howard for discussions and suggestions. This work is supported in part by Perimeter Institute for Theoretical Physics. Research at Perimeter Institute is supported by the Government of Canada through Industry Canada and by the Province of Ontario through the Ministry of Research and Innovation.

\appendix

\section{Proof of \lref{lem:ProjEA}}
\begin{proof}
This lemma is  closely related to  Weyl's theorem that the HW group is uniquely characterized by the discrete analogy of the canonical commutation relation \cite{Weyl31book}.
Suppose $H$ has a $d$-dimensional faithful irreducible projective representation $A\mapsto U_A$ for $A\in H$. Let $A_1$ be an arbitrary element in $H$ other than the identity. Given that the representation is faithful, $U_{A_1}$ cannot be proportional to the identity and thus cannot commute with all  elements in the representation according to  Schur's lemma. So there exists an element $B_1\in H$ such that
\begin{equation}
U_{A_1}U_{B_1} U_{A_1}^\dag U_{B_1}^\dag =\rme^{\rmi\phi}
\end{equation}
with $\rme^{\rmi\phi}\neq 1$  a phase factor. Observing that $U_{B_1}^p$ is proportional to the identity, we conclude that $\rme^{\rmi\phi}$ is a $p$th root of unity. Replacing $B_1$ with a suitable power if necessary, we may assume that
\begin{equation}
U_{A_1}U_{B_1} U_{A_1}^\dag U_{B_1}^\dag =\omega^{-1},
\end{equation}
where $\omega=\rme^{2\pi\rmi/p}$ is a primitive $p$th root of unity. Note that $A_1$ and $B_1$ generate a group $H_1$ of order $p^2$. If $n=1$, then $H=H_1$,  the representation must have degree $p$, and the image is (projectively) unitarily  equivalent to the HW group in dimension $p$ according to Weyl's theorem \cite{Zhu10}.

If $n>1$,  let $A_2$ be an arbitrary element in $H$ not contained in $H_1$. Then $U_{A_2}$ commutes with $U_{A_1}$ and $U_{B_1}$ up to phase factors that are $p$th roots of unity. By multiplying $A_2$ with a suitable element in $H_1$ if necessary, we can ensure that $U_{A_2}$ commutes with $U_{A_1}, U_{B_1}$ and, consequently, the representations  of all elements in $H_1$. By the same reasoning as in the previous paragraph, there exists an element $B_2\in H$ such that
\begin{equation}
U_{A_2}U_{B_2} U_{A_2}^\dag U_{B_2}^\dag =\omega^{-1}.
\end{equation}
Note that $B_2$ cannot belong to $H_1$. By multiplying $B_2$ with a suitable element in $H_1$ if necessary, we may assume that $U_{B_2}$ commutes with $U_{A_1}$ and $U_{B_1}$.  Continuing this procedure, we can eventually find $n$ pairs of generators $A_1, B_1,\ldots, A_n,B_n$ of $H$ such that  $U_{A_1}, U_{B_1},\ldots, U_{A_n},U_{B_n}$ satisfy the canonical commutation relations
\begin{equation}\label{eq:CCRn2}
\begin{aligned}
U_{A_j}U_{B_k}U_{A_j}^{\dag} U_{B_k}^{\dag}&=\omega^{-\delta_{jk}},\\
\quad U_{A_j}U_{A_k}U_{A_j}^{\dag} U_{A_k}^{\dag}&=1,\\
U_{B_j}U_{B_k}U_{B_j}^{\dag} U_{B_k}^{\dag}&=1, \quad j,k=1,2,\ldots,n.
\end{aligned}
\end{equation}
According to the multipartite analogy of the Weyl's theorem \cite{BoltRW61I,BoltRW61II}, the representation must have degree $p^n$ and the image  must be (projectively) unitarily equivalent to the multipartite HW group in dimension $p^n$.
\end{proof}

\section{Proof of \lref{lem:3covariantQuasi}}
\begin{proof}
Suppose on the contrary that the SIC $\{\Pi_j\}$ is quasi-triply covariant.
Then it   is necessarily doubly covariant, that is, super-symmetric, given that the symmetry group of the SIC is a normal subgroup  of the extended symmetry group of index at most 2.
In addition, the normalized triple products $\tilde{T}_{jkl}$ for distinct $j,k,l$ can  take on at most two different values, which are  conjugate phase factors, say  $t$ and $t^*$. If $t=\pm1$, then the same proof of \lref{lem:3covariant} applies. Similar reasoning also shows that $t$ cannot equal $\pm\rmi$ when $d\geq3$.

Now suppose that $t^4\neq 1$.  Since the SIC $\{\Pi_j\}$ is  super-symmetric, the multiset $\{\tilde{T}_{mjk}\}_{m=1}^{d^2}$ is  invariant under complex conjugation and is independent of $j,k$ as long as  $j,k$ are distinct. It follows that the multiset contains two copies of 1 and $(d^2-2)/2$ copies of $t$ and $t^*$. This observation implies the lemma immediately when $d$ is odd.
 In general, choose a triple $j,k,l$ such that  $T_{jkl}=t$. According to \eref{eq:TripleProdP2} and the assumption  $t^4\neq 1$, if $m$ is distinct from $j,k,l$, then two of the three numbers $\tilde{T}_{mjk}, \tilde{T}_{mkl}, \tilde{T}_{mlj}$ are equal to $t$ and one equal to $t^*$. It follows that  the three multisets $\{\tilde{T}_{mjk}\}_{m=1}^{d^2}$, $\{\tilde{T}_{mkl}\}_{m=1}^{d^2}$, and $\{\tilde{T}_{mlj}\}_{m=1}^{d^2}$
contain at least $2(d^2-3)$ copies of $t$ in total. Therefore,
$2(d^2-3)\leq 3(d^2-2)/2$, that is, $d^2\leq 6$, which can never  hold when $d\geq3$.
\end{proof}

\section{\label{sec:PermuGroup}Permutation groups}
For the convenience of the reader, in this appendix we introduce several basic concepts and results about permutation groups that are relevant to  the study in the main text; see \rscite{DixoM96book, Came99book,KurzS04book} for more details.

Given a  finite set $\Omega=\{\alpha,\beta,\ldots\}$ with $n$ elements, the group composed  of all permutations on $\Omega$ is called the symmetric group on $\Omega$ and denoted by $S_\Omega$.  The group $S_\Omega$ is isomorphic to the symmetric group of the set $\{1,2,\ldots, n\}$, which  is usually denoted by $S_n$.
A group action of $G$  on $\Omega$ is a map from the Cartesian product $G\times \Omega$ to $\Omega$ that satisfies $1\alpha=\alpha$ and $g(h\alpha)=(gh)\alpha$, where 1 denotes the identity of $G$ (and  also the trivial group with only one element) and $g\alpha$ denotes the image of the pair $(g,\alpha)$ under the map. An \emph{orbit}  is the set of images of a point $\alpha\in \Omega$  (elements of $\Omega$ are usually referred to as points) under the action of $G$,  that is, $\{g\alpha: g\in G\}$. All orbits of the action form a partition of $\Omega$. The \emph{stabilizer} $G_\alpha$ of a point $\alpha$ is the group composed of all elements $g$ that leave $\alpha$ invariant, that is, $g\alpha=\alpha$.
The \emph{kernel} of the action is the group composed of all elements $g$ that act trivially on $\Omega$, that is, $g\alpha=\alpha$ for all $\alpha\in \Omega$. By definition, the kernel is the intersection $\cap_{\alpha\in\Omega}  G_\alpha$ of all point stabilizers.
Alternatively, a group action of $G$  on $\Omega$ is a homomorphism from $G$ to $S_\Omega$, and the kernel of the action coincides with the kernel of the homomorphism. The action is \emph{faithful} if the kernel is trivial.
A \emph{permutation group}  on $\Omega$ is a group $G$ that acts faithfully  on $\Omega$, which  can be identified as a subgroup of $S_\Omega$.  The \emph{degree} of the permutation group is the cardinality of $\Omega$.

A  permutation group $G$  on $\Omega$ is \emph{transitive} if every  element in $\Omega$ can be mapped to every other one under the action of $G$. In that case, all point stabilizers are conjugate to each other, and
the order of $G$ is equal to the product of the order of each point stabilizer and the cardinality of $\Omega$, that is, $|G|=|G_\alpha| |\Omega|$.  A transitive group is \emph{regular} if  each point stabilizer is trivial, in which case $|G|=|\Omega|$.
The group $G$ is \emph{$k$-transitive} if every ordered $k$-tuple of distinct elements of $\Omega$ can be mapped to every other such $k$-tuple.
When $k>1$, a group  is $k$-transitive if and only if it is transitive and each point stabilizer is $(k-1)$-transitive on the remaining points.
A $k$-transitive group  is \emph{sharply $k$-transitive} if  each $k$-point stabilizer is trivial.
The group $G$ is \emph{$k$-homogeneous} if every unordered $k$-tuple of distinct elements  can be mapped to every other such $k$-tuple. By definition,  a permutation group of degree $n$ is $k$-homogeneous if and only if it is $(n-k)$-homogeneous. In addition,
every $k$-transitive permutation group is $k$-homogeneous.

A \emph{block} $\Delta$ (also called a set of imprimitivity)  of the action of $G$ on $\Omega$ is a nonempty subset of $\Omega$ such that $g\Delta:=\{g\alpha: \alpha\in \Delta\}$ for any $g\in G$ is either identical with $\Delta$ or disjoint from $\Delta$.  The block is nontrivial if it is a proper subset of $\Omega$ that contains more than one elements. A transitive permutation group  is \emph{imprimitive} if there exists a nontrivial block and \emph{primitive} otherwise. Alternatively, the group  is primitive if no nontrivial partition of $\Omega$ is left invariant, where a  partition is nontrivial if it has at least two components each of which has at least two points.
 Primitive permutation groups play a crucial role in the study of permutation groups. Their basic properties are listed below for easy reference.
\begin{enumerate}
\item A permutation group $G$ is primitive if and only if each point stabilizer is a maximal subgroup of $G$.

\item Every normal subgroup $N\neq 1$ of a primitive permutation group $G$ on $\Omega$ acts transitively on $\Omega$.

\item Every 2-transitive permutation group is primitive.
\end{enumerate}
\begin{remark}
A \emph{maximal subgroup} $M$ of a group $G$ is a proper subgroup that is not contained in any other proper subgroup.
\end{remark}

To better understand the properties of primitive and 2-transitive permutation groups, we need to introduce several new concepts. A \emph{minimal normal subgroup} $N$ of a group $G$ is a normal subgroup other than the identity that contains no other nontrivial normal subgroup of $G$. Any minimal normal subgroup  is a direct product of isomorphic simple groups. Any two distinct minimal normal subgroups of a given group have a trivial intersection and commute with each other. The \emph{socle}  of a group $G$ is the product of all minimal normal subgroups of $G$ and is denoted by $\soc(G)$; it is a characteristic subgroup of $G$. Recall that a characteristic subgroup of $G$ is a subgroup that is invariant under all automorphisms of $G$. The structure of minimal normal subgroups of a primitive permutation group is characterized by Theorem 4.3B in \rcite{DixoM96book}, as reproduced here.
\begin{theorem}\label{thm:primitive}
If $G$ is a finite primitive permutation group on $\Omega$, and $N$ a minimal normal subgroup of $G$, then exactly one of the following holds:
\begin{enumerate}
\item $N$ is a regular elementary abelian group, and $\soc(G)=N=C_G(N)$.

\item $N$ is a regular nonabelian group, $C_G(N)$ is a minimal normal subgroup of $G$ which is permutation isomorphic to $N$, and $\soc(G)=N\times C_G(N)$.

\item $N$ is nonabelian, $C_G(N)=1$, and $\soc(G)=N$.
\end{enumerate}
\end{theorem}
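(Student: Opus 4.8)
The plan is to prove Theorem~\ref{thm:primitive} by studying the centralizer $C_G(N)$, which is automatically a normal subgroup of $G$ because $N$ is. The one lemma I would isolate at the outset is purely about permutation groups: \emph{if $K,L\le S_\Omega$ centralize each other and $K$ is transitive, then $L$ is semiregular}. Indeed, if $\ell\in L$ fixes a point $\alpha$ and $\beta=k\alpha$ for some $k\in K$, then $\ell\beta=\ell k\alpha=k\ell\alpha=k\alpha=\beta$, so $\ell$ fixes every point of $\Omega$ and hence $\ell=1$ by faithfulness. Combining this with the fact (property~2 of primitive permutation groups, stated above) that every nontrivial normal subgroup of a primitive $G$ is transitive, I immediately get: $N$ is transitive; $C_G(N)$ is either trivial or transitive; and whenever $C_G(N)\neq 1$, both $N$ and $C_G(N)$, being simultaneously transitive and semiregular (apply the lemma with $K=C_G(N)$ and then with $K=N$), are in fact regular. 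I would also note that, in any transitive group, a transitive subgroup of a regular subgroup must equal it (compare orders via orbit--stabilizer).

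Next I split according to whether $N$ is abelian. Since $N$ is a direct product of isomorphic simple groups, if it is abelian it is elementary abelian; moreover applying the lemma with $K=L=N$ shows $N$ is regular, so $N$ is regular elementary abelian. Then $C_G(N)\supseteq N\neq 1$ is transitive and semiregular, hence regular, and since it contains the regular group $N$ on the same set, $C_G(N)=N$. Finally, any minimal normal subgroup $M\neq N$ would centralize $N$ (distinct minimal normal subgroups commute) and hence lie in $C_G(N)=N$, contradicting $M\cap N=1$; so $N$ is the unique minimal normal subgroup and $\soc(G)=N$. This is case~1.

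If $N$ is nonabelian, then $N=T_1\times\cdots\times T_k$ with the $T_i$ nonabelian simple, so $Z(N)=1$ and $N\cap C_G(N)=1$. If $C_G(N)=1$, the same argument as above (any other minimal normal subgroup lies in $C_G(N)$) gives $\soc(G)=N$, which is case~3. If $C_G(N)\neq1$, then by the first paragraph both $N$ and $C_G(N)$ are regular; any nontrivial normal subgroup of $G$ contained in $C_G(N)$ is transitive and hence, lying in the regular group $C_G(N)$, equals it, so $C_G(N)$ is itself a minimal normal subgroup, distinct from $N$ since $N\cap C_G(N)=1$. To see they are permutation isomorphic, identify $\Omega$ with $N$ via $n\mapsto n\alpha$ for a fixed $\alpha$; then $N$ acts by left translations, while each $c\in C_G(N)$ satisfies $c(n\alpha)=n(c\alpha)=nn_c\alpha$ where $c\alpha=n_c\alpha$, so $C_G(N)$ acts by right translations, and the inversion map $n\mapsto n^{-1}$ intertwines the left- and right-regular representations of $N$. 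Any minimal normal subgroup other than $N$ centralizes $N$, hence lies in and equals $C_G(N)$, so $\soc(G)=N\times C_G(N)$; this is case~2. The three cases are visibly mutually exclusive (abelian vs.\ nonabelian, then $C_G(N)=1$ vs.\ $C_G(N)\neq1$).

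The hard part is not hard in any deep sense: everything rests on the centralizer lemma together with transitivity of normal subgroups of primitive groups. The one place that genuinely needs care is the regular nonabelian branch (case~2), where I must upgrade "$C_G(N)$ is normal" to "$C_G(N)$ is \emph{minimal} normal" and, more delicately, prove it is \emph{permutation} isomorphic (not merely abstractly isomorphic) to $N$ via the left/right regular representation bookkeeping and the inversion map. A smaller point to state explicitly is that an abelian minimal normal subgroup is elementary abelian, which is what distinguishes case~1 from the nonabelian cases.
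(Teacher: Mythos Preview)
The paper does not give its own proof of this theorem; it is quoted verbatim from Dixon--Mortimer (Theorem~4.3B) as background and used without argument. Your proof is correct and is essentially the standard textbook argument found there: the centralizer lemma (a transitive subgroup forces its centralizer to be semiregular), together with the transitivity of nontrivial normal subgroups of a primitive group, drives the entire trichotomy, and you handle the one delicate point (case~2: minimality of $C_G(N)$ and the permutation isomorphism via the left/right regular representations intertwined by inversion) properly.
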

\begin{remark}
An elementary abelian group is the direct product of cyclic groups of the same  prime order.  Here $C_G(N)$ denotes the centralizer of $N$ in $G$. Two permutation groups on $\Omega$ are \emph{permutation isomorphic} if they are conjugate to each other under $S_\Omega$. This theorem implies that a primitive permutation group has at most two minimal normal subgroups. It has only one minimal normal subgroup if the socle is abelian. Note that the socle of a primitive permutation group is abelian if and only if one  minimal normal subgroup is abelian.
\end{remark}

\section{Proof of \lref{lem:index2Normalsub}}
\begin{proof}
Note that any subgroup of index 2 is normal,  that any 2-transitive permutation group is primitive, and that any nontrivial normal subgroup of a primitive  permutation group is transitive \cite{DixoM96book,Came99book,KurzS04book}.
It follows  that $H$ is a transitive normal subgroup of $G$. In addition, the point stabilizer  $H_\alpha$ of any point $\alpha\in \Omega$ has either one orbit or two orbits of equal length on the remaining points of $\Omega$.
In the former case, $H$ is 2-transitive and is thus primitive; in the latter case, it is also straightforward to show that $H$ is primitive. According to \thref{thm:primitive}, $H$ has  either one or  two minimal normal subgroups, and in the latter case the two minimal normal subgroups are nonabelian regular and are centralizers of each other\footnote{The author is grateful to Dragomir \v{Z} {\DJ}okovi\'{c} for simplifying the proof of \lref{lem:index2Normalsub}.}.

Let $N$ be the unique  minimal normal subgroup of $G$. Then   $N$ is contained in $H$ and contains one of the  minimal normal subgroups of $H$, say $M$. According to Burnside's theorem (\thref{thm:Burnside} in the main text) on 2-transitive permutation groups \cite{DixoM96book,Came99book},  $N$ is either elementary abelian regular or nonabelian simple, and the centralizer $C_G(N)$ is either identical with $N$ or trivial accordingly. Therefore, $M$ must be identical with $N$ since  it is either a transitive subgroup of the regular group $N$ or a normal subgroup of the simple group $N$.  Consequently, the centralizer $C_H(M)$ is either identical with $M$ or trivial.
It follows from \thref{thm:primitive} that $H$ has a unique minimal normal subgroup, which coincides with the unique minimal normal subgroup of $G$.
\end{proof}

\section{Proof of \lref{lem:2transitiveAS}}
All 2-transitive permutation groups have been classified by Huppert \cite{Hupp57} and Hering \cite{Heri74,Heri85} (see also \rscite{DixoM96book,Came99book}), with the aid of the classification of finite simple groups (CFSG) \cite{ConwCNP85,Wils09book}.
Almost simple 2-transitive permutation groups are listed in  Table 7.4 in  \rcite{Came99book}. According to this table, if the minimal normal subgroup (socle) $N$ of the group is not  the alternating group $A_n$ for $n\geq 5$, then the degree $n$ can only take on the following values:
\begin{enumerate}
\item $(q^k-1)/(q-1)$ with $k\geq 2$ and $(k,q)\neq(2,2), (2,3)$;
\item $2^{2k-1}+2^{k-1}$ with $k\geq 3$;
\item $2^{2k-1}-2^{k-1}$ with $k\geq 3$;
\item $q^3+1$ with  $q\geq3$;
\item $q^2+1$ with $q=2^{2k+1}>2$;
\item 11, 12, 15, 22, 23, 24, 28, 176, 276;
\end{enumerate}
where $q$ is a prime power and $k$ a positive integer.

In case 1, $N$ is isomorphic to $\PSL{k}{q}$. According to \lref{lem:NageL} below, the degree $n$ is a perfect square if and only if  $(k,q)=(2,8)$, $(4,7)$, or $(5,3)$, in which case  $n$ is equal to $3^2$, $20^2$, or $11^2$ accordingly.

In case 2, $N$ is isomorphic to $\Sp{2k}{2}$. According to \lref{lem:perfectS3} below, the degree $n$ is a perfect square if and only if $k=3$, in which case $n$ is equal to~$6^2$.

In case 3, the degree can never be a perfect square according to \lref{lem:perfectS3}. In cases 4 and 5, the degree can never be a perfect square according to \lref{lem:perfectS1} below. In case 6, the degree can never be a perfect square by direct inspection.  This observation completes the proof of \lref{lem:2transitiveAS}.

\begin{lemma}\label{lem:perfectS1}
Suppose $q$ is a prime power and $m$ a positive integer.  Then $q+1$ is a perfect square if and only if $q=3$ or $q=8$;  $q^m+1$ is a perfect square if and only if $(m,q)=(1,3)$, $(1,8)$, or $(3,2)$. \end{lemma}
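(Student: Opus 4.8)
The plan is to prove the stronger statement about $q^m+1$ directly, and then obtain the statement about $q+1$ as the special case $m=1$. Write $q=p^n$ with $p$ prime and $n\geq 1$, and suppose $q^m+1=s^2$ for a positive integer $s$. Since $q\geq 2$ we have $s^2\geq 3$, so $s\geq 2$, and $q^m=s^2-1=(s-1)(s+1)$ with both factors positive. As the product equals the prime power $p^{nm}$, each of $s-1$ and $s+1$ is itself a power of $p$ (possibly $p^0=1$). The whole argument hinges on the elementary fact that $s-1$ and $s+1$ differ by $2$, so $\gcd(s-1,s+1)$ divides $2$.

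First I would handle $p$ odd. Then $q^m$ is odd, $s$ is even, and $s-1,s+1$ are coprime; being coprime divisors of a prime power, one of them equals $1$, and since $s+1>1$ this forces $s-1=1$, i.e. $s=2$ and $q^m=3$. As $3$ is prime this gives exactly $(m,q)=(1,3)$. Next I would treat $p=2$: now $q^m$ is even, $s$ is odd, and since $s=1$ would give $q^m=0$ we have $s\geq 3$. Writing $s-1=2a$, $s+1=2b$ with $b=a+1$ and $a\geq 1$, we get $q^m=4ab$ with $\gcd(a,b)=\gcd(a,a+1)=1$, so $ab$ is a power of $2$; being a product of coprime factors of a prime power, one of $a,b$ equals $1$, and since $b\geq 2$ this forces $a=1$, $b=2$, hence $q^m=8$. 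This yields precisely $(m,q)=(3,2)$ or $(m,q)=(1,8)$. Combining the two cases, $q^m+1$ is a perfect square if and only if $(m,q)\in\{(1,3),(1,8),(3,2)\}$, and specializing to $m=1$ gives that $q+1$ is a perfect square if and only if $q\in\{3,8\}$.

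There is no real obstacle here; the only points needing a little care are the small-value bookkeeping (ruling out $s\leq 2$ and $q^m\leq 2$, and checking that $q^m=3$ and $q^m=8$ admit only the listed prime-power solutions) and making explicit the step that coprime divisors of a prime power are trivial except for one of them. Both are routine, so I would keep the write-up short.
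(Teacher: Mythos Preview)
Your proof is correct and uses essentially the same idea as the paper: factor $q^m=(s-1)(s+1)$ and exploit that a product of two integers differing by~$2$ can be a prime power in only very restricted ways. The paper organizes the argument in the reverse order---it first treats the case $m=1$ and then observes that $q^m$ is itself a prime power, so the general statement reduces to the special one---whereas you treat $q^m$ directly and specialize at the end; but the substance is identical.
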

\begin{proof}
Obviously  $q+1$ is a perfect square when $q=3$ or $q=8$.   Suppose $q=p^j$ and $q+1=b^2$, where $p$ is a prime,  and $j, b$ are positive integers.  Then $p^j=(b-1)(b+1)$. If $q>3$ then $b>2$, so $p$ divides both $b-1$ and $b+1$ and thus  equals 2. Consequently, both $b-1$ and $b+1$ must be powers of 2. It follows that $b=3$ and $q=8$. The second part of the lemma is an immediate consequence of the first part given that  $q^m$ is also a prime power.
\end{proof}

\begin{lemma}\label{lem:perfectS2}
Suppose $q$ is a power of the prime $p$ and $j\geq k$ are nonnegative integers.  Then $q^j+q^k$ is a perfect square if and only if one of the following conditions is satisfied
\begin{enumerate}
\item $j=k$ is odd and $q$ is an odd power of 2.
\item $q^k$ is a perfect square and $(j-k,q)=(1,3)$, $(1,8)$, or $(3,2)$.
\end{enumerate}
\end{lemma}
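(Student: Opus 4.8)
The plan is to reduce everything to Lemma~\ref{lem:perfectS1} after factoring out the common power of $q$. Write $m=j-k$, so that $q^j+q^k=q^k(q^m+1)$. The argument then splits according to whether $m=0$ or $m\geq 1$, and I will check that these two cases produce exactly conditions~1 and~2 respectively and that they are mutually exclusive, so that the disjunction in the statement is clean.

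First I would dispose of the case $m=0$, where $q^j+q^k=2q^k$. If $q$ is odd, then $2q^k$ is divisible by $2$ but not by $4$, hence is not a perfect square; note also that when $q^k$ is a perfect square $2q^k$ is twice a square and so again not a square, so condition~2 cannot intrude here. If $q=2^a$, then $2q^k=2^{ak+1}$, which is a perfect square if and only if $ak+1$ is even, i.e.\ $ak$ is odd, i.e.\ both $a$ and $k$ are odd. Since $j=k$, this is precisely condition~1: $j=k$ odd and $q$ an odd power of $2$. (In this situation $q^k=2^{ak}$ with $ak$ odd is itself not a perfect square, which is why condition~1 is stated without the hypothesis that $q^k$ be a square.)

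Next I would treat $m\geq 1$. Since $p\mid q$ we have $q^m+1\equiv 1\pmod p$, so $\gcd(q^k,q^m+1)=1$. A product of two coprime positive integers is a perfect square if and only if each factor is; hence $q^j+q^k=q^k(q^m+1)$ is a perfect square if and only if $q^k$ is a perfect square and $q^m+1$ is a perfect square. Because $q^m$ is again a prime power, Lemma~\ref{lem:perfectS1} shows that $q^m+1$ is a perfect square exactly when $(m,q)=(1,3)$, $(1,8)$, or $(3,2)$. Together with the requirement that $q^k$ be a square, this is exactly condition~2 with $j-k=m$. Finally, the cases $m=0$ and $m\geq 1$ are disjoint and exhaust all possibilities, while condition~1 forces $m=0$ and condition~2 forces $m\in\{1,3\}$, so the two conditions never overlap and the equivalence follows.

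There is no serious obstacle in this argument; it is elementary once the factorization $q^j+q^k=q^k(q^m+1)$ is in place. The only point needing a little care is the bookkeeping around $m=0$: one must notice that $2\cdot(\text{perfect square})$ is never a perfect square, so that condition~2 genuinely cannot apply there, and that condition~1 is deliberately not phrased in terms of $q^k$ being a square. The case $m\geq 1$ is then a direct invocation of coprimality together with Lemma~\ref{lem:perfectS1}.
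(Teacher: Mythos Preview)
Your proof is correct and follows essentially the same approach as the paper: factor $q^j+q^k=q^k(q^m+1)$, handle $m=0$ by parity of the exponent of $2$, and reduce $m\geq 1$ to Lemma~\ref{lem:perfectS1}. The only cosmetic difference is that you invoke coprimality of $q^k$ and $q^m+1$ directly, whereas the paper pulls out an odd power of $p$ to reach the same conclusion.
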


\begin{proof}
If $j=k$, then $q^j+q^k=2q^j$, which is a perfect square if and only if $j$ is odd and $q$ is an odd power of 2. If $j>k$, then $q^j+q^k=q^k(q^m+1)$ with $m=j-k$. If $q^k$ is not a perfect square, then $q^j+q^k$ is a perfect square if and only if $p(q^m+1)$ is a perfect square, which is impossible. Otherwise, $q^m+1$ is a perfect square, so  $(m,q)=(1,3)$, $(1,8)$, or $(3,2)$ according to \lref{lem:perfectS1}.
\end{proof}

\begin{lemma}\label{lem:perfectS3}
Suppose $j>k$ are nonnegative  integers. Then $2^j+2^k$  is a perfect square if and only if $k$ is even and $j=k+3$;  $2^j-2^k$  is a perfect square if and only if $k$ is even and $j=k+1$.
\end{lemma}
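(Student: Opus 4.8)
The plan is to pull out the largest power of $2$ dividing each expression and thereby reduce the problem to deciding when $2^m+1$ or $2^m-1$ is a perfect square; the first of these is exactly the content of \lref{lem:perfectS1} specialized to $q=2$, and the second is elementary.

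For the sum, I would write $2^j+2^k = 2^k\bigl(2^{m}+1\bigr)$ with $m:=j-k\geq 1$ (here $m\geq 1$ because $j>k$). Since $2^m+1$ is odd, it is coprime to $2^k$, so by unique factorization the product is a perfect square if and only if both $2^k$ and $2^m+1$ are. Now $2^k$ is a square precisely when $k$ is even, and $2^m+1$ is a square precisely when $m=3$ by \lref{lem:perfectS1}; combining these gives that $2^j+2^k$ is a perfect square if and only if $k$ is even and $j=k+3$.

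For the difference I would argue analogously: $2^j-2^k = 2^k\bigl(2^{m}-1\bigr)$ with $m=j-k\geq 1$, and $2^m-1$ is odd, hence coprime to $2^k$, so the product is a square if and only if $k$ is even and $2^m-1$ is a square. If $m=1$ then $2^m-1=1$ is a square; if $m\geq 2$ then $2^m-1\equiv 3\pmod 4$, and since every perfect square is congruent to $0$ or $1$ modulo $4$, this case is impossible. Hence $2^m-1$ is a square exactly when $m=1$, so $2^j-2^k$ is a perfect square if and only if $k$ is even and $j=k+1$. The argument has no genuine obstacle; the only points needing care are the appeal to unique factorization for the "coprime product of squares" reduction and correctly importing \lref{lem:perfectS1} instead of re-deriving it.
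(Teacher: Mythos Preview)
Your proof is correct and follows essentially the same approach as the paper: factor out $2^k$, use coprimality of $2^k$ and the odd cofactor to reduce to squareness of each factor, and settle the cofactor by \lref{lem:perfectS1} (for the sum) and a mod-$4$ argument (for the difference). The only cosmetic difference is that the paper routes the sum case through the more general \lref{lem:perfectS2}, whereas you invoke \lref{lem:perfectS1} directly; this is a harmless streamlining.
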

\begin{proof} The first statement follows from \lref{lem:perfectS2}.
If $k$ is odd, then $2^j-2^k$ is a perfect square if and only if $2(2^m-1)$ with $m=j-k$ is a perfect square, which is impossible. Otherwise, $2^j-2^k$ is a perfect square if and only if $(2^m-1)$  is a perfect square, say  $2^m-1=b^2$ with $b$ a positive integer.  This is possible if and only if $m=1$ given that $b^2+1$ is not divisible by 4.
\end{proof}

\begin{lemma}\label{lem:NageL}
Suppose $q$ is a prime power and $k\geq2$ a positive integer.  Then $(q^k-1)/(q-1)$ is a perfect square if and only if the pair $(k,q)$ takes on one of the four possible values  $(2,3)$, $(5,3)$, $(4,7)$, and $(2,8)$.
\end{lemma}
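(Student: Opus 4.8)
The plan is to dispose of $k=2$ immediately, reduce the case $k\ge 3$ to prime $k$ by a descent, and then handle prime $k$ by elementary squeezing together with the classical theory of Nagell--Ljunggren-type equations. Write $f_k(q)=\frac{q^k-1}{q-1}=1+q+\cdots+q^{k-1}$ and suppose $f_k(q)=y^2$ with $q$ a prime power, $k\ge 2$, and $y$ a positive integer. For $k=2$, $f_2(q)=q+1$ is a perfect square exactly when $q=3$ or $q=8$ by \lref{lem:perfectS1}; this is the only point where the prime-power hypothesis is needed. So assume $k\ge 3$; the claim is that the only surviving possibilities are $(k,q)=(5,3)$ and $(k,q)=(4,7)$, which do occur since $f_5(3)=121=11^2$ and $f_4(7)=400=20^2$.

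To reduce to $k$ prime I use strong induction on $k$. If $k$ is composite, fix a prime $\ell\mid k$, put $m=k/\ell\ge 2$ and $Q=q^m$, so $f_k(q)=f_\ell(Q)\,f_m(q)$. Since $f_m(q)\mid Q-1$ and $\gcd\bigl(f_\ell(Q),Q-1\bigr)=\gcd(\ell,Q-1)$, the greatest common divisor $d$ of the two factors divides $\ell$, and $v_\ell\bigl(f_\ell(Q)\bigr)\le 1$ by lifting the exponent. If $d=1$ the factors are coprime and hence both perfect squares, so the induction hypothesis applied to $f_\ell(Q)$ (with $Q$ a prime power and $2\le\ell<k$) forces $(\ell,Q)\in\{(2,3),(2,8),(5,3)\}$ since $\ell$ is prime; each of these contradicts $m\ge 2$ or yields $(k,q)=(6,2)$ with $f_6(2)=63$ not a square. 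If $d=\ell$, splitting off the exact factor $\ell$ from $f_\ell(Q)$ and the $\ell$-part of $f_m(q)$ shows that $\frac{Q^\ell-1}{\ell(Q-1)}$ is a perfect square, an equation of the same type (now with $Q$ a perfect $m$-th power), to be closed by the same tools as in the prime-$k$ case. So it remains to treat $k$ prime.

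For $k=3$, $f_3(q)=q^2+q+1$ lies strictly between $q^2$ and $(q+1)^2$ and hence is never a square. For $k=5$, note $16f_5(q)-(4q^2+2q+1)^2=4q^2+12q+15>0$ while $(4q^2+2q+2)^2-16f_5(q)=4(q-3)(q+1)$, so for $q\ge 4$ the perfect square $16f_5(q)$ is trapped strictly between two consecutive squares --- impossible; checking $q=2,3$ by hand leaves only $q=3$. For $k=2m$ even ($m\ge 2$), write $f_k(q)=f_m(q)\,(q^m+1)$ with $\gcd\bigl(f_m(q),q^m+1\bigr)\mid 2$; a short analysis of the parities of $q$ and $m$ (using $2$-adic valuations when $4\mid k$) forces $q^m+1$ to be a square or twice a square and $f_m(q)$ to be a square after removing the matching power of $2$. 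Invoking \lref{lem:perfectS1} and the classical resolution, in prime powers, of the companion equations $q^m+1=z^2$, $q^m+1=2z^2$, and $2s^4-2s^2+1=z^2$ --- the prime-power hypothesis again being essential, e.g.\ to discard the Pell value $q=41$ --- one is left with $q^m=49$, that is, $(k,q)=(4,7)$.

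The hard residual case is $k=p\ge 7$ an odd prime (and likewise the companion equation $\frac{q^p-1}{p(q-1)}=z^2$): here no elementary squeeze is available, and I see no way to make the prime-power hypothesis carry the argument. This case must be settled by Ljunggren's theorem, that $\frac{x^n-1}{x-1}=y^2$ in integers $x\ge 2$, $n\ge 3$ has only the solutions $(x,n,y)=(7,4,20)$ and $(3,5,11)$; once this theorem is available it actually subsumes the whole analysis for $k\ge 3$ at one stroke, and the Pell- and quartic-type companion equations above are among its classical relatives. The elementary reductions described here serve to minimise the dependence on such heavy machinery, but the genuine obstacle --- the proof of the Nagell--Ljunggren statement itself --- cannot be avoided.
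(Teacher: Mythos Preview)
Your bottom line is exactly the paper's proof: handle $k=2$ via \lref{lem:perfectS1}, and for $k\ge 3$ invoke Ljunggren's resolution of the Nagell--Ljunggren equation $\frac{x^n-1}{x-1}=y^2$, which already covers all integers $x\ge 2$ without any prime-power hypothesis. The paper stops there. Your additional elementary machinery---the squeezes for $k=3,5$ and the composite-to-prime descent---is an attempt to reduce the dependence on Ljunggren, but as you yourself concede in the final paragraph, it does not succeed: the residual prime case $k\ge 7$ and the companion equation both require Ljunggren anyway, so the extra work is decorative rather than load-bearing.

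A few of the decorations are also shaky. In the composite descent with $d=\ell$, the equation $\tfrac{Q^\ell-1}{\ell(Q-1)}=z^2$ that emerges is \emph{not} an instance of the statement being proved, so the induction hypothesis does not apply and you are thrown straight back onto Ljunggren-type input; the descent therefore buys nothing beyond the $d=1$ branch. The paragraph on even $k\ge 4$ sits after ``it remains to treat $k$ prime'' and duplicates the composite reduction already performed. Finally, the remark that the prime-power hypothesis is ``essential'' to discard $q=41$ is wrong: $f_4(41)=70644$ is not a perfect square, and what rules it out in your factored analysis is the failure of $(q+1)/2=21$ to be a square, not anything about $41$ being a prime power; indeed Ljunggren's theorem itself carries no such hypothesis. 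The squeezes for $k=3$ and $k=5$ are correct and pleasant, but since Ljunggren is invoked regardless, the cleanest write-up is the paper's two-line version.
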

\begin{proof}
The equation
\begin{equation}
\frac{q^k-1}{q-1}=d^2
\end{equation}
with $d$ a positive integer is a special instance of  the Nagell-Ljunggren equation. If $k\geq 3$, then the only integer solutions are $(k,q,d)=(5,3,11)$ and $(k,q,d)=(4,7,20)$ \cite{Ljun43, Ribe94,BugeM07}\footnote{The author is grateful to Gergely Harcos  for helping proving \lref{lem:NageL} by introducing the concept of Nagell-Ljunggren equation and  \rcite{Ljun43} and to Daniel El-Baz for recommending \rcite{Ribe94} in response to a question posed by the author on MathOverflow.}.

If   $k=2$, then $(q^k-1)/(q-1)=q+1$ is a perfect square  if and only if $q=3$ or $q=8$ according to \lref{lem:perfectS1}.
\end{proof}

\bibliographystyle{elsarticle-num}
\bibliography{all_references}

\end{document}